\documentclass[journal]{IEEEtran}
\pdfminorversion=4

\usepackage{amssymb, amsmath, color}
\usepackage{amsthm,graphicx,enumerate,verbatim,xcolor}
\usepackage[small]{caption}
\usepackage{bbm}
\usepackage{tikz}
\usetikzlibrary{arrows,positioning,fit,backgrounds}
\usetikzlibrary{calc}

\newtheorem{thm}{Theorem}

\newtheorem{lem}{Lemma}
\newtheorem{defn}{Definition}

\newtheorem{property}{Property}
\theoremstyle{remark}
\newtheorem{rem}{Remark}

\def\squarebox#1{\hbox to #1{\hfill\vbox to #1{\vfill}}}

\newcommand{\inout}{-}
\newcommand{\QEDA}{\hfill\ensuremath{\blacksquare}}

\newcommand{\Pbar}{\overline{P}}

\newcommand{\eps}{\varepsilon}

\newcommand{\Ebb}{\mathbb{E}}

\newcommand{\Pbb}{\mathbb{P}}
\newcommand{\Rbb}{\mathbb{R}}

\newcommand{\Bcal}{\mathcal{B}}
\newcommand{\Ccal}{\mathcal{C}}

\newcommand{\Kcal}{\mathcal{K}}

\newcommand{\Gcal}{\mathcal{G}}
\newcommand{\Ucal}{\mathcal{U}}
\newcommand{\Vcal}{\mathcal{V}}

\newcommand{\Xcal}{\mathcal{X}}
\newcommand{\Ycal}{\mathcal{Y}}

\newcommand{\Lcal}{\mathcal{L}}
\newcommand{\cn}{\mathcal{C}^n}
\newcommand{\Pbf}{\mathbf{P}}
\newcommand{\Qbf}{\mathbf{Q}}
\newcommand{\ls}{\limsup_{n\rightarrow\infty}}

\newcommand\blfootnote[1]{%
  \begingroup
  \renewcommand\thefootnote{}\footnote{#1}%
  \addtocounter{footnote}{-1}%
  \endgroup
}

\title{The Likelihood Encoder for Lossy Compression}

\author{ Eva~C.~Song, ~\IEEEmembership{Member,~IEEE,}
        Paul~Cuff, ~\IEEEmembership{Member,~IEEE,}
        and~ H.~Vincent~Poor,~\IEEEmembership{Fellow,~IEEE}
        % <-this % stops a space
\thanks{The authors are with the Department
of Electrical Engineering, Princeton University, Princeton,
NJ 08544, USA  e-mail: \{csong, cuff, poor\}@princeton.edu.}% <-this % stops a space
}

\begin{document}

\maketitle
\blfootnote{Parts of this work were presented at 2013 IEEE Information Theory Workshop (ITW) \cite{cuff-itw2013} and 2014 IEEE International Symposium on Information Theory (ISIT) \cite{song-isit14}.

Copyright (c) 2014 IEEE. Personal use of this material is permitted.  However, permission to use this material for any other purposes must be obtained from the IEEE by sending a request to pubs-permissions@ieee.org.}

\begin{abstract}
A likelihood encoder is studied in the context of lossy source compression. The analysis of the likelihood encoder is based on the soft-covering lemma. It is demonstrated that the use of a likelihood encoder together with the soft-covering lemma yields simple achievability proofs for classical source coding problems. The cases of the point-to-point rate-distortion function, the rate-distortion function with side information at the decoder (i.e. the Wyner-Ziv problem), and the multi-terminal source coding inner bound (i.e. the Berger-Tung problem) are examined in this paper. Furthermore, a non-asymptotic analysis is used for the point-to-point case to examine the upper bound on the excess distortion provided by this method. The likelihood encoder is also related to a recent alternative technique using properties of random binning.
\end{abstract}

\begin{IEEEkeywords}
Berger-Tung, likelihood encoder, rate-distortion theory, soft-covering, source coding, Wyner-Ziv
\end{IEEEkeywords}

\section{Introduction}
Rate-distortion theory, founded by Shannon in \cite{shannon-math} and \cite{shannon-rd}, provides the fundamental limits of lossy source compression. The minimum rate required to represent an independent and identically distributed (i.i.d.) source sequence under a given tolerance of distortion is given by the rate-distortion function. Related problems such as source coding with side information available at the decoder \cite{wz} and distributed source coding \cite{tung}, \cite{berger1977}, \cite{berger1989} have also been heavily studied in the past decades. Standard proofs \cite{cover}, \cite{network-it} of achievability for these rate-distortion problems often use joint-typicality encoding, i.e. the encoder looks for a codeword that is jointly typical with the source sequence. The distortion analysis involves bounding several ``error" events which may come from either encoding or decoding. These bounds use the joint asymptotic equipartition principle (J-AEP) and its immediate consequences as the main tool. In the cases where there are multiple information sources, such as side information at the decoder, intricacies arise, like the need for a Markov lemma \cite{cover}, \cite{network-it}. These subtleties also lead to error-prone proofs involving the analysis of error for random binning, which have been pointed out in several existing works \cite{hybrid}, \cite{lapidoth}.

In this work, we propose using a likelihood encoder to achieve these source coding results. The likelihood encoder is a stochastic encoder. For a chosen joint distribution $P_{XY}$, to encode a source sequence $x_1,...,x_n$ (i.e. $x^n$) with codebook $\Ccal^{(n)}=\{y^n(m)\}_m$, the encoder stochastically chooses an index $m$ with probability proportional to the likelihood of observing $x^n$  through the memoryless ``test channel" $P_{X|Y}$ given that the input is $y^n(m)$. 

The advantage of using such an encoder is that it naturally leads to an idealized distribution which is simple to analyze, based on the test channel. The distortion performance of the idealized distribution carries over to the source-reproduction joint distribution because the two distributions are shown to be close in total variation. Unlike the proof using the joint-typicality encoder, we do not need to identify different kinds of error events -- the distortion analysis of the idealized distribution is straightforward.

This proof technique of using an idealized distribution to approximate the source-reproduction joint distribution captures the performance of the encoder and decoder from a high level. Precise behaviors of the system are illuminated through the approximating distributions.  In other contexts, beyond the scope of this paper, this feature of the proof method can greatly simplify the analysis of secrecy and other objectives which demand comprehensive characterization of the behavior of the system.  In this paper we demonstrate this technique in more basic settings of rate-distortion theory, showing its effectiveness in simplifying and illuminating even those proofs.

Just as the joint-typicality encoder relies on the J-AEP, the likelihood encoder relies on the soft-covering lemma.\footnote{Note the difference between ``joint-typicality encoder" and the concept of ``joint-typicality". The concept of joint-typicality is used in the analysis of the soft-covering lemma, but the likelihood encoder itself is oblivious to this notion.} %Note the difference between ``joint-typicality encoder" and the concept of ``typicality". The proof of the soft-covering lemma still relies on the concept of typicality. 
The idea of soft-covering was first introduced in \cite{wyner} and was later used in \cite{han-verdu} for channel resolvability. We introduced the idea of the likelihood encoder, in conjunction with the soft-covering lemma in \cite{cuff-isit2008} and \cite{cuff2012distributed} to achieve strong coordination and again used it in \cite{schieler-journal} for secrecy. Recent works in the literature have applied this tool in various other settings. In \cite{watanabe},  the soft-covering lemma and a smoothed version of the likelihood encoder are applied to derive one-shot achievability bounds for multiuser source coding problems. In \cite{slc}, the likelihood encoder is used in the proof for the Berger-Tung setting, although the analysis is quite different from the one used in this work. Similar ideas also arise in quantum information theory such as in \cite{datta}.

The application of the likelihood encoder together with the soft-covering lemma is not limited to only discrete alphabets. The proof for sources from continuous alphabets is readily included, since the soft-covering lemma imposes no restriction on alphabet size. Therefore, in contrast to \cite{network-it}, no extra work, i.e. quantization of the source, is needed to extend the standard proof for discrete sources to continuous sources. This advantage becomes more pronounced for the multi-terminal case, since generalization of the type-covering lemma and the Markov lemma to continuous alphabets is non-trivial. Although strong versions of the Markov lemma on finite alphabets that can prove the Berger-Tung inner bound can be found in \cite{network-it} and \cite{coord}, generalization to continuous alphabets is still an ongoing research topic. Some works, such as \cite{jeon} and \cite{mitran2010}, have been dedicated to making this transition, yet are not strong enough to be applied to the Berger-Tung case.

The rest of the paper is organized as follows. In Section \ref{prelim}, we introduce notation, some basic concepts and properties, define the likelihood encoder and give the soft-covering lemma. Sections \ref{p2p} to \ref{bt} deal with the point-to-point rate-distortion, Wyner-Ziv, and Berger-Tung problems, respectively, with increasing complexity. Within each of these sections, we first review the problem setup along with the result, and then give the achievability proof using the likelihood encoder. In Section \ref{non-asymptotic}, we apply a non-asymptotic analysis to the excess distortion for the point-to-point case. In Section \ref{binning}, we relate the likelihood encoder to a proportional-probability encoder \cite{rb-yassaee}, whose analysis is based on random-binning. Finally, in Section \ref{conclude}, we summarize the work.

\section{Preliminaries} \label{prelim}
\subsection{Notation}
A vector $(X_1,..., X_n)$ is denoted by $X^n$. Limits taken with respect to ``$n\rightarrow \infty$" are abbreviated as ``$\rightarrow_n$". When $X$ denotes a random variable, $x$ is used to denote a realization, $\mathcal{X}$ is used to denote the support of that random variable, and $\Delta_{\Xcal}$ is used to denote the probability simplex of distributions with alphabet $\Xcal$. A Markov relation is denoted by the symbol $\inout$. We use $\Ebb_P$, $\Pbb_P$, and $I_{P}(X;Y)$ to indicate expectation, probability, and mutual information taken with respect to a distribution $P$; however, when the distribution is clear from the context, the subscript will be omitted. To keep the notation uncluttered, the arguments of a distribution are sometimes omitted when the arguments' symbols match the subscripts of the distribution, e.g. $P_{X|Y}(x|y)=P_{X|Y}$. 
We use a bold capital letter $\mathbf{P}$ to denote that a distribution $P$ is random. We use $\Rbb$ to denote the set of real numbers and $\Rbb^+=[0,+\infty)$. 

For a per-letter distortion measure $d: \mathcal{X} \times \mathcal{Y}\mapsto \mathbb{R}^+$, we use $\Ebb [d(X,Y)]$ to measure the distortion of $X$ incurred by representing it as $Y$. The maximum distortion is defined as
\begin{eqnarray}
d_{max}=\max_{(x,y) \in \Xcal\times\Ycal}d(x,y).
\end{eqnarray}
The distortion between two sequences is defined to be the per-letter average distortion
\begin{eqnarray} 
d(x^n,y^n)=\frac1n\sum_{t=1}^n d(x_t,y_t).
\end{eqnarray}

\subsection{Total Variation Distance}
The total variation distance between two probability measures $P$ and $Q$ on the same $\sigma$-algebra $\mathcal{F}$ of subsets of the sample space $\Xcal$ is defined as
\begin{eqnarray}
\lVert P-Q\rVert_{TV}\triangleq \sup_{\mathcal{A}\in \mathcal{F}}|P(\mathcal{A})-Q(\mathcal{A})|.
\end{eqnarray}

\begin{property}[cf. \cite{schieler-journal}, Property 2] \label{property-tv}
Total variation distance satisfies the following properties:
\begin{enumerate}[(a)] 
\item \label{a} If $\Xcal$ is countable, then total variation can be rewritten as 
\begin{equation}
\lVert P - Q \rVert_{TV} = \frac12 \sum_{x\in\Xcal} |p(x)-q(x)|,
\end{equation}
where $p(\cdot)$ and $q(\cdot)$ are the probability mass functions of $X$ under $P$ and $Q$, respectively.
\item \label{b} Let $\eps>0$ and let $f(x)$ be a function in a bounded range with width $b \in\Rbb^+$. Then
\begin{equation} 
\label{tvcontinuous}
\lVert P-Q \rVert_{TV} < \eps \:\Longrightarrow\: \big| \Ebb_P[f(X)] - \Ebb_Q[f(X)] \big | < \eps b.
\end{equation}
\item \label{c} Total variation satisfies the triangle inequality. For any $P, Q, S \in \Delta_{\Xcal}$, 
\begin{equation}
\lVert P - Q \rVert_{TV} \leq \lVert P - S \rVert_{TV} + \lVert S - Q \rVert_{TV}.
\end{equation}
\item \label{d} Let $P_{X}P_{Y|X}$ and $Q_XP_{Y|X}$ be joint distributions on $\Delta_{\Xcal\times\Ycal}$. Then 
\begin{equation}
\lVert P_XP_{Y|X} - Q_X P_{Y|X} \rVert_{TV} = \lVert P_X - Q_X \rVert_{TV}.
\end{equation}
\item \label{e} For any $P,Q \in \Delta_{\Xcal\times\Ycal}$, 
\begin{equation}
\lVert P_X - Q_X \rVert_{TV} \leq \lVert P_{XY} - Q_{XY} \rVert_{TV},
\end{equation}
where $P_X$ and $Q_X$ are the marginals of $P$ and $Q$, respectively.
\end{enumerate}
\end{property}

\subsection{The Likelihood Encoder} \label{sub-le}
We now define the likelihood encoder, operating at rate $R$, which observes a sequence $x_1,...,x_n$ and maps it to a message $M \in \{1, \ldots ,2^{nR}\}$.  In normal usage, a decoder will then use $M$ to form an approximate reconstruction of the $x_1,...,x_n$ sequence.

The encoder is specified by a codebook $\{y^n(1), \ldots, y^n(2^{nR})\}$ and a joint distribution $P_{XY}$.  Consider the likelihood function for each codeword, with respect to a memoryless channel from $Y$ to $X$, defined as follows:
\begin{eqnarray}
\Lcal(m|x^n)&\triangleq& P_{X^n|Y^n}(x^n|y^n(m))\\
&\triangleq&\prod_{t=1}^nP_{X|Y}(x_t|y_t(m)).
\end{eqnarray}
A likelihood encoder is a stochastic encoder that determines the message index $m$ with probability proportional to $\Lcal(m|x^n)$, i.e. 
\begin{eqnarray}
P_{M|X^n}(m|x^n)=\frac{\Lcal(m|x^n)}{\sum_{m'\in\{1, \ldots, 2^{nR}\}}\Lcal(m'|x^n)}\propto \Lcal(m|x^n).
\end{eqnarray}

\subsection{Soft-Covering Lemma}
Now we introduce the core lemma that serves as the foundation for this analysis. One can consider the role of the soft-covering lemma in analyzing the likelihood encoder as analogous to that of the J-AEP which is used for the analysis of joint-typicality encoders. The general idea of the soft-covering lemma is that the distribution induced by selecting uniformly from a random codebook and passing the codeword through a memoryless channel is close to an i.i.d. distribution as long as the codebook size is large enough. The idea of soft covering originated from Wyner \cite{wyner} in the context of common information. Later, this result was generalized by \cite{han-verdu} to prove the fundamental limits of channel resolvability, which has both an achievability and a converse part, and has since been studied and strengthened in a variety of contexts. The version used in this paper is proved in \cite{cuff2012distributed} for channel synthesis.
\begin{lem}[\cite{cuff2012distributed}, Lemma IV.1]
Given a joint distribution $P_{XY}$, let $\Ccal^{(n)}$ be a random collection of sequences $Y^n(m)$, with $m=1,...,2^{nR}$, each drawn independently according to $\prod_{t=1}^n P_Y(y_t)$. Denote by $\Pbf_{X^n}$ the output distribution induced by independently selecting an index $M$ uniformly at random and applying $Y^n(M)$ to the memoryless channel specified by $P_{X|Y}$. Then if $R>I(X;Y)$,
\begin{eqnarray}
\Ebb_{\cn}\left[\left\lVert \Pbf_{X^n}-\prod_{t=1}^n P_X\right\rVert_{TV}\right]\rightarrow_n 0.
\end{eqnarray}
\end{lem}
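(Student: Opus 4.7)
The plan is to reduce the lemma to showing that a uniform average of likelihood ratios concentrates around $1$ over the codebook draw. Write
\[
\Pbf_{X^n}(x^n)-\prod_{t=1}^n P_X(x_t)=\prod_{t=1}^n P_X(x_t)\,\bigl(\bar L(x^n)-1\bigr),
\]
where $L_m(x^n)\triangleq \prod_{t=1}^n P_{X|Y}(x_t|Y_t(m))/\prod_{t=1}^n P_X(x_t)$ and $\bar L(x^n)\triangleq 2^{-nR}\sum_{m=1}^{2^{nR}}L_m(x^n)$. Using Property~\ref{property-tv}(\ref{a}) (and its integral analogue for continuous alphabets), the expected TV equals $\tfrac12\,\Ebb_{X^n\sim\prod_t P_X}\!\bigl[\Ebb_{\cn}|\bar L(X^n)-1|\bigr]$. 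The $L_m$'s are i.i.d.\ over the codebook with mean~$1$, so one expects $\bar L\to 1$; the obstacle is that $L_m$ itself can be as large as $\Theta(2^n)$, ruling out any naive second-moment argument on the raw $L_m$.

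I would remove this obstacle by truncating to the jointly typical set. Fix $\epsilon>0$ and let $\Tcal_\epsilon^{(n)}$ be the $\epsilon$-typical set for $P_{XY}$; set $\tilde L_m(x^n)=L_m(x^n)\,\mathbbm{1}\{(x^n,Y^n(m))\in\Tcal_\epsilon^{(n)}\}$, with average $\bar{\tilde L}(x^n)$. Typicality forces $\tilde L_m\le B_n\triangleq 2^{n(I(X;Y)+2\epsilon)}$. Splitting by the triangle inequality,
\[
\Ebb\bigl|\bar L(x^n)-1\bigr|\le \Ebb\bigl|\bar{\tilde L}(x^n)-\Ebb\bar{\tilde L}(x^n)\bigr|+\bigl(1-\Ebb\bar{\tilde L}(x^n)\bigr)+\Ebb\bigl[\bar L(x^n)-\bar{\tilde L}(x^n)\bigr],
\]
where the last two terms are equal because $\Ebb\bar L(x^n)=1$.

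The variance term I would handle by Chebyshev: since the $\tilde L_m$'s are i.i.d.\ in $[0,B_n]$, $\mathrm{Var}(\bar{\tilde L}(x^n))\le 2^{-nR}B_n$, whence $\Ebb|\bar{\tilde L}-\Ebb\bar{\tilde L}|\le 2^{-n(R-I(X;Y)-2\epsilon)/2}$, which vanishes uniformly in $x^n$ whenever $R>I(X;Y)+2\epsilon$. The remaining two (equal) terms, after being weighted by $\prod_t P_X(x_t)$ and summed over $x^n$, collapse via the change-of-measure identity $\prod_t P_X(x_t)\cdot L_m(x^n)\cdot \prod_t P_Y(y_t)=P_{XY}^n(x^n,y^n)$ to $\Pbb_{P_{XY}^n}\bigl((X^n,Y^n)\notin\Tcal_\epsilon^{(n)}\bigr)$, which tends to $0$ by the joint weak AEP. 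Combining the three pieces, taking the codebook expectation outside the sum over $x^n$, and then letting $\epsilon\downarrow 0$ yields the lemma.

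The single hard step is the truncation choice: the jointly typical cutoff caps $B_n$ at essentially $2^{nI(X;Y)}$, which is exactly what makes the $2^{-nR}B_n$ variance bound useful precisely in the regime $R>I(X;Y)$, while simultaneously forcing the discarded tail to be controlled by AEP. Everything else is routine bookkeeping.
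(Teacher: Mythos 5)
Your proof is correct, and it is essentially the same argument as the one in the cited source \cite{cuff2012distributed} (the paper itself does not reprove the lemma, it only imports it): truncate the likelihood ratios at the jointly typical/information-density threshold $\approx 2^{n I(X;Y)}$, control the truncated average by a second-moment (Cauchy--Schwarz/Jensen, rather than ``Chebyshev'' as you name it) bound that vanishes exactly when $R>I(X;Y)$, and absorb the discarded tail into $\Pbb_{P_{XY}^n}[(X^n,Y^n)\notin\Tcal_\epsilon^{(n)}]\rightarrow_n 0$ via the joint AEP. The only cosmetic caveats are that a fixed $\epsilon<(R-I(X;Y))/2$ already suffices (no need to let $\epsilon\downarrow 0$), and that for continuous alphabets the typical set must be defined through information densities, which is precisely how the cited proof handles general alphabets.
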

The next three sections contain the focus of this paper, where we will use the soft-covering lemma to obtain simple achievability proofs for the rate-distortion function, the Wyner-Ziv problem, and the Berger-Tung inner bound for distributed source coding.

\section{The Point-to-Point Rate-Distortion Problem} \label{p2p}
Let us first start with point-to-point lossy compression. This simple setting outlines the key steps in the analysis, which are applied to the more complex settings in Section \ref{sec-wz} and \ref{bt}. 
\subsection{Problem Setup and Result Review} \label{sec-p2psetup}
Rate-distortion theory determines the optimal compression rate $R$ for an i.i.d. source sequence $X^n$ distributed according to $X_t \sim P_X$ with the following constraints:
\begin{itemize}
\item Encoder $f_n: \mathcal{X}^n \mapsto \mathcal{M}$ (possibly stochastic);
\item Decoder $g_n: \mathcal{M} \mapsto \mathcal{Y}^n$ (possibly stochastic);
\item Compression rate: $R$, i.e. $|\mathcal{M}|=2^{nR}$.
\end{itemize}
The system performance is measured according to the time-averaged distortion (as defined in the notation section):
\begin{itemize}
\item Average distortion: $d(X^n, Y^n)=\frac1n\sum_{t=1}^nd(X_t,Y_t)$.
\end{itemize}

\begin{defn} \label{df1}
A rate distortion pair $(R,D)$ is achievable if there exists a sequence of rate $R$ encoders and decoders $(f_n, g_n)$, such that 
\begin{eqnarray}
\ls\Ebb [d(X^n,Y^n)]\leq D.
\end{eqnarray}
\end{defn}
\begin{defn} \label{rate-distortion}
The rate distortion function is $R(D)\triangleq \inf_{\{(R,D) \text{ is achievable}\}} R$.
\end{defn}
The above mathematical formulation is illustrated in Fig. \ref{setup}.
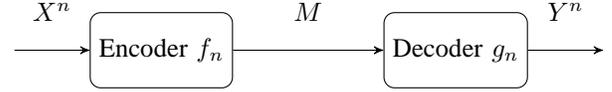
\begin{figure}
  \centering
\begin{tikzpicture}
[node distance=1cm,minimum height=10mm,minimum width=14mm,arw/.style={->,>=stealth'}]
  \node[coordinate] (source) {};
  \node[rectangle,draw,rounded corners] (encoder) [right =of source] {Encoder $f_n$};
  \node[rectangle,draw,rounded corners] (decoder) [right =2cm of encoder] {Decoder $g_n$};
  \node[coordinate] (sink)[right =of decoder] {};
    
  \draw [arw] (source) to node[midway,above]{$X^n$} (encoder);
  \draw [arw] (encoder) to node[midway,above]{$M$} (decoder);  
  \draw [arw] (decoder) to node[midway,above]{$Y^n$} (sink);
  
\end{tikzpicture}
\caption{Point-to-point lossy compression setup}
\label{setup}
\end{figure}
The characterization of this fundamental quantity in information theory was shown by Shannon \cite{shannon-math} (and e.g. \cite{cover} ) as
\begin{eqnarray}
R(D)&=& \min_{P_{Y|X}: \mathbb{E}[d(X,Y)]\leq D} I(X;Y), \label{rdfunction}
\end{eqnarray}
where the mutual information is taken with respect to $P_{XY}=P_X P_{Y|X}$. In other words, we are able to achieve distortion level $D$ with any rate greater than $R(D)$ given in $(\ref{rdfunction})$.

\subsection{Achievability Proof Using the Likelihood Encoder} \label{p2p-proof}
Here we make an additional note on the notation. In the following proof, $P$ is reserved for denoting the source-reproduction joint distribution, which we refer to as the system-induced distribution. Recall that bold letter $\Pbf$ indicates that the distribution itself is random because it is a function of the random codebook. The single letter distributions appearing in the right-hand side of $(\ref{rdfunction})$ are replaced with $\Pbar$ in the following proof to avoid confusion with the system-induced distribution. The marginal and conditional distributions derived from $\Pbar_{XY}$ are denoted as $\Pbar_X$, $\Pbar_Y$, $\Pbar_{X|Y}$ and $\Pbar_{Y|X}$. Note that the source $\Pbar_X=P_X$. We use $\Pbar_{X^nY^n}$ to denote an i.i.d. distribution, i.e. 
\begin{eqnarray}
\Pbar_{X^nY^n}=\prod_{t=1}^n\Pbar_{XY}, 
\end{eqnarray}
and similarly for the marginal and conditional distributions derived from $\Pbar_{XY}$.

\subsubsection{High-level outline}
To prove achievability, we will use the likelihood encoder and approximate the system-induced distribution by a well-behaved distribution. The soft-covering lemma allows us to claim that the approximating distribution matches the system.

Let $R > R(D)$, where $R(D)$ is from the right-hand side of $(\ref{rdfunction})$.  We prove that $R$ is achievable for distortion $D$. By the rate-distortion formula stated in $(\ref{rdfunction})$, we can fix $\Pbar_{Y|X}$ such that $R>I_{\Pbar}(X;Y)$ and $\Ebb_{\Pbar} [d(X,Y)]<D$. We will use the likelihood encoder derived from $\Pbar_{XY}$ and a random codebook $\{y^n(m)\}_m$ generated according to $\Pbar_{Y}$ to prove the result. The decoder will simply reproduce $y^n(M)$ upon receiving the message $M$. 

The joint distribution of source-index-reproduction induced by the encoder and decoder is
\begin{eqnarray}
&&\Pbf_{X^nMY^n}(x^n,m,y^n)\nonumber\\
&=&P_{X^n}(x^n)\Pbf_{M|X^n}(m|x^n)\Pbf_{Y^n|M}(y^n|m)\\
&\triangleq&P_{X^n}(x^n)\Pbf_{LE}(m|x^n)\Pbf_{D}(y^n|m) \label{sysind}
\end{eqnarray}
where $\Pbf_{LE}$ is the likelihood encoder and $\Pbf_{D}$ is a codeword lookup decoder. 

We will show that this is well approximated by the uniform distribution over the message index and a memoryless channel from the reconstruction sequence to the source sequence according to $\Pbar_{X|Y}$. This distribution will clearly achieve the desired distortion.

\subsubsection{Proof}
We now concisely restate the behavior of the encoder and decoder -- components of the induced distribution stated in $(\ref{sysind})$. These are derived from the distribution $\Pbar_{XY}$ stated in the outline.

{\textbf{Codebook generation}}: We independently generate $2^{nR}$ sequences in $\mathcal{Y}^n$ according to $\prod_{t=1}^n \Pbar_Y(y_t)$ and index them by $m\in\{1, \ldots,2^{nR}\}$. We use $\mathcal{C}^{(n)}$ to denote the random codebook.

{\textbf{Encoder}}: The encoder $\Pbf_{LE}(m|x^n)$ is the likelihood encoder that chooses $M$ stochastically with probability proportional to the likelihood function given by
\begin{eqnarray}
\Lcal(m|x^n)=\Pbar_{X^n|Y^n}(x^n|Y^n(m)).
\end{eqnarray}

{\textbf{Decoder}}: The decoder $\Pbf_{D}(y^n|m)$ is a codeword lookup decoder that simply reproduces $Y^n(m)$.

{\textbf{Analysis}}: We will consider two distributions for the analysis, the system-induced distribution $\Pbf$ and an approximating distribution $\Qbf$, which is much easier to analyze. We will show that $\Pbf$ and $\Qbf$ are close in total variation (on average over the random codebook). Hence, $\Pbf$ achieves the performance of $\Qbf$.

\begin{figure}
  \centering
\begin{tikzpicture}
[node distance=1cm,minimum height=10mm,minimum width=14mm,arw/.style={->,>=stealth'}]
  \node[coordinate] (source) {};
  \node[rectangle,draw,rounded corners] (encoder) [right =of source] {$\Ccal^{(n)}$};
  \node[rectangle,draw,rounded corners] (decoder) [right =2cm of encoder] {$\Pbar_{X|Y}$};
  \node[coordinate] (sink)[right =of decoder] {};
    
  \draw [arw] (source) to node[midway,above]{$M$} (encoder);
  \draw [arw] (encoder) to node[midway,above]{$Y^n(M)$} (decoder);  
  \draw [arw] (decoder) to node[midway,above]{$X^n$} (sink);
  
\end{tikzpicture}
\caption{Idealized distribution with test channel $\Pbar_{X|Y}$}
\label{test}
\end{figure}
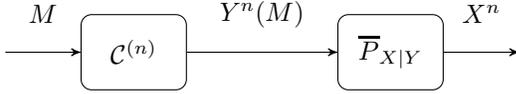

Design the approximating distribution $\Qbf$ via a uniform distribution over the same random codebook and a test channel $\Pbar_{X|Y}$ as shown in Fig. \ref{test}. We will refer to a distribution of this structure as an idealized distribution.
The joint distribution under the idealized distribution $\Qbf$ shown in Fig. \ref{test} can be written as 
\begin{eqnarray}
&&\mathbf{Q}_{X^nMY^n}(x^n,m,y^n)\nonumber\\
&=&Q_M(m)\mathbf{Q}_{Y^n|M}(y^n|m)\mathbf{Q}_{X^n|M}(x^n|m)\label{Qmarkov}\\
&=&\frac{1}{2^{nR}}\mathbbm{1}\{y^n=Y^n(m)\}\prod_{t=1}^n\Pbar_{X|Y}(x_t|Y_t(m))\\
&=&\frac{1}{2^{nR}}\mathbbm{1}\{y^n=Y^n(m)\}\prod_{t=1}^n\Pbar_{X|Y}(x_t|y_t).
\end{eqnarray}

The idealized distribution $\Qbf$ has the following property: for any $(x^n,y^n)\in \Xcal^n\times \Ycal^n$, 
\begin{eqnarray}
&&\mathbb{E}_{\mathcal{C}^{(n)}}[\mathbf{Q}_{X^nY^n}(x^n,y^n)]\nonumber\\
&=&\mathbb{E}_{\mathcal{C}^{(n)}}\left[\frac{1}{2^{nR}}\sum_{m}\mathbbm{1}\{y^n=Y^n(m)\}\right]\nonumber\\
&&\prod_{t=1}^n\Pbar_{X|Y}(x_t|y_t)\\
&=&\frac{1}{2^{nR}}\sum_{m}\mathbb{E}_{\mathcal{C}^{(n)}}[\mathbbm{1}\{y^n=Y^n(m)\}]\nonumber\\
&&\prod_{t=1}^n\Pbar_{X|Y}(x_t|y_t)\\
&=&\frac{1}{2^{nR}}\sum_{m}\Pbar_{Y^n}(y^n)\prod_{t=1}^n\Pbar_{X|Y}(x_t|y_t)\\
&=&\Pbar_{X^nY^n}(x^n,y^n) \label{expectation}
\end{eqnarray}
where $\Pbar_{X^nY^n}$ denotes the i.i.d. distribution $\prod_{t=1}^n\Pbar_{XY}$. This implies, in particular, that the distortion under the idealized distribution $\Qbf$ averaged over the random codebook, conveniently simplifies to $\Ebb_{\Pbar} [d(X,Y)]$. That is,
\begin{eqnarray}
&&\mathbb{E}_{\mathcal{C}^{(n)}}\left[ \mathbb{E}_\mathbf{Q}[d(X^n,Y^n)]\right]\nonumber\\
&=&\mathbb{E}_{\mathcal{C}^{(n)}}\left[\sum_{x^n,y^n}\mathbf{Q}(x^n,y^n)d(x^n,y^n)\right]\label{n4}\\
&=&\sum_{x^n,y^n}\mathbb{E}_{\mathcal{C}^{(n)}}[\mathbf{Q}(x^n,y^n)] d(x^n,y^n)\\
&=&\sum_{x^n,y^n}\Pbar_{X^n,Y^n}(x^n,y^n)d(x^n,y^n) \label{takingexp}\\
&=&\mathbb{E}_{\Pbar}[d(X^n,Y^n)]\\
&=&\Ebb_{\Pbar} [d(X,Y)], \label{distortion-iid}
\end{eqnarray}
where $(\ref{takingexp})$ follows from $(\ref{expectation})$. It is worth emphasizing that although $\Qbf_{X^nY^n}$ is very different from the i.i.d. distribution $\Pbar_{X^nY^n}$, it is exactly the i.i.d. distribution when averaged over codebooks and thus achieves the same expected distortion.

Below is the key observation given in $(\ref{encp2p})$ and $(\ref{decp2p})$: the conditional distributions under $\Qbf$ match our choice of encoder and decoder under the system-induced distribution $\Pbf$. In fact, our motivation for using the likelihood encoder comes from this construction of $\Qbf$. Notice that
\begin{eqnarray}
\mathbf{Q}_{M|X^n}(m|x^n)=\Pbf_{LE}(m|x^n), \label{encp2p}
\end{eqnarray}
and
\begin{eqnarray}
\mathbf{Q}_{Y^n|M}(y^n|m)=\Pbf_{D}(y^n|m). \label{decp2p}
\end{eqnarray}

Now invoking the soft-covering lemma, since $R>I_{\Pbar}(X;Y)$, we have
\begin{eqnarray}
\mathbb{E}_{\mathcal{C}^{(n)}}\left[\lVert\Pbar_{X^n}-\mathbf{Q}_{X^n}\rVert_{TV}\right]\leq \epsilon_n,
\end{eqnarray}
where $\epsilon_n\rightarrow_n 0$. This gives us 
\begin{eqnarray}
&&\mathbb{E}_{\mathcal{C}^{(n)}}\left[\lVert\mathbf{P}_{X^nY^n}-\mathbf{Q}_{X^nY^n}\rVert_{TV}\right]\nonumber\\
&\leq&\mathbb{E}_{\mathcal{C}^{(n)}}\left[\lVert\mathbf{P}_{X^nY^nM}-\mathbf{Q}_{X^nY^nM}\rVert_{TV}\right]\label{n9}\\
&\leq&\epsilon_n,\label{bound-tv}
\end{eqnarray}
where $(\ref{n9})$ follows from Property \ref{property-tv}$(\ref{e})$ and $(\ref{bound-tv})$ follows from $(\ref{encp2p})$, $(\ref{decp2p})$ and Property \ref{property-tv}$(\ref{d})$. 

By Property \ref{property-tv}$(\ref{b})$,
\begin{eqnarray}
&&\left\vert\Ebb_{\mathbf{P}}[d(X^n,Y^n)]-\Ebb_{\mathbf{Q}}[d(X^n,Y^n)]\right|\nonumber\\
&\leq& d_{max}\lVert\mathbf{P}-\mathbf{Q}\rVert_{TV}.\label{expected-distortion}
\end{eqnarray}

Now we apply the random coding argument.
\begin{eqnarray}
&&\Ebb_{\Ccal^{(n)}}\left[\Ebb_{\mathbf{P}}[d(X^n,Y^n)]\right]\nonumber\\
&\leq&\Ebb_{\Ccal^{(n)}}\left[\Ebb_{\Qbf} [d(X^n,Y^n)]\right]\nonumber\\
&&+\Ebb_{\Ccal^{(n)}}\left[\left|\Ebb_{\mathbf{P}}[d(X^n,Y^n)]-\Ebb_{\mathbf{Q}}[d(X^n,Y^n)]\right|\right]\label{mm1}\\
&\leq&\Ebb_{\Pbar}[d(X,Y)]\nonumber\\
&&+d_{max}\Ebb_{\Ccal^{(n)}}\left[\lVert\mathbf{P}_{X^nY^n}-\mathbf{Q}_{X^nY^n}\rVert_{TV}\right]\label{n2}\\
&\leq&\Ebb_{\Pbar}[d(X,Y)]+d_{max}\epsilon_n \label{n3}
\end{eqnarray}
where $(\ref{n2})$ follows from $(\ref{distortion-iid})$ and $(\ref{expected-distortion})$; $(\ref{n3})$ follows from $(\ref{bound-tv})$.
Taking the limit on both sides of the inequalities gives us
\begin{eqnarray}
\ls\Ebb_{\Ccal^{(n)}}\left[\Ebb_{\mathbf{P}}[d(X^n,Y^n)]\right] \leq D.
\end{eqnarray}
Therefore, there exists a codebook satisfying the requirement. \QEDA

\begin{rem}
As the proof emphasizes, the distribution $\mathbf{Q}$ serves as an accurate approximation to the true system behavior, and this is not unique to the likelihood encoder.  In \cite{schieler-itw2013} a converse statement is shown.  That is, any efficient source encoding satisfying a distortion constraint behaves like $\mathbf{Q}$ as measured by normalized divergence.  However, a stochastic encoder is generally required for the approximation to hold in total variation.  Furthermore, for the likelihood encoder, the accuracy of this approximation is easily verified using the soft-covering lemma.  For other encoders, the proof of the fact that $\Qbf$ is a good approximation to the induced $\Pbf$ requires more effort to establish.
\end{rem}

\subsection{Excess Distortion} \label{sec-excess}
The proof above is for the average distortion criterion, i.e.
\begin{eqnarray}
\ls\Ebb\left[\sum_{t=1}^nd(X_t,Y_t)\right]\leq D. 
\end{eqnarray}
However, it is not hard to modify the proof to show that it also holds for excess distortion. 

With the same setup as in Section \ref{sec-p2psetup}, we change the average distortion requirement in the definition of achievability (Definition \ref{df1}) to the requirement that
\begin{eqnarray}
\Pbb\left[d(X^n,Y^n)>D \right]\rightarrow_n 0.
\end{eqnarray}
The corresponding rate-distortion function is still given by $R(D)$ in $(\ref{rdfunction})$. 

For the excess distortion, we use the exact same encoding/decoding scheme, along with the same random codebook $\mathcal{C}^n$, from Section \ref{p2p-proof}. We make the following modifications.

We replace $(\ref{n4})$ to $(\ref{distortion-iid})$ with
\begin{eqnarray}
&&\mathbb{E}_{\mathcal{C}^{(n)}}\left[ \mathbb{P}_\mathbf{Q} \left[d(X^n,Y^n)>D \right]\right]\nonumber\\
&=&\mathbb{E}_{\mathcal{C}^{(n)}}\left[\sum_{x^n,y^n}\mathbf{Q}(x^n,y^n)\mathbbm{1}\{d(X^n,Y^n)>D\}\right]\\
&=&\sum_{x^n,y^n}\mathbb{E}_{\mathcal{C}^{(n)}}[\mathbf{Q}(x^n,y^n)] \mathbbm{1}\{d(x^n,y^n)>D\}\\
&=&\sum_{x^n,y^n}\Pbar_{X^n,Y^n}(x^n,y^n)\mathbbm{1}\{d(x^n,y^n)>D\} \\
&=&\mathbb{P}_{\Pbar}[d(X^n,Y^n)>D], \label{mm2}
\end{eqnarray}
and replace $(\ref{mm1})$ to $(\ref{n3})$ with
\begin{eqnarray}
&&\mathbb{E}_{\mathcal{C}^{(n)}}\left[\Pbb_{\mathbf{P}}[d(X^n,Y^n)>D]\right]\nonumber\\
&\leq&\mathbb{E}_{\mathcal{C}^{(n)}}\left[\Pbb_{\Qbf}[d(X^n,Y^n)>D]\right]+\epsilon_n\\
&=&\Pbb_{\Pbar}\left[d(X^n,Y^n)>D\right]+\epsilon_n \label{ww1}
\end{eqnarray}
where the last step follows from $(\ref{mm2})$. Therefore, there exists a codebook that satisfies the requirement. \QEDA

\section{The Wyner-Ziv Problem} \label{sec-wz}
In this section, we use the mechanism that was established in Section \ref{p2p} and build upon it to solve a more complicated problem. The Wyner-Ziv problem, that is, the rate-distortion function with side information at the decoder, was solved in \cite{wz}. 

\subsection{Problem Setup and Result Review}
The source and side information pair $(X^n,Z^n)$ is distributed i.i.d. according to $(X_t,Z_t) \sim P_{XZ}$. The system has the following constraints:
\begin{itemize}
\item Encoder $f_n: \mathcal{X}^n \mapsto \mathcal{M}$ (possibly stochastic);
\item Decoder $g_n: \mathcal{M}\times \mathcal{Z}^n \mapsto \mathcal{Y}^n$ (possibly stochastic);
\item Compression rate: $R$, i.e. $|\mathcal{M}|=2^{nR}$.
\end{itemize}
The system performance is measured according to the time-averaged distortion (as defined in the notation section):
\begin{itemize}
\item Average distortion: $d(X^n, Y^n)=\frac1n\sum_{t=1}^n d(X_t,Y_t)$.\\
\end{itemize}

\begin{defn}
A rate distortion pair $(R,D)$ is achievable if there exists a sequence of rate $R$ encoders and decoders $(f_n, g_n)$, such that 
\begin{eqnarray}
\ls\Ebb \left[d(X^n,Y^n)\right]\leq D.
\end{eqnarray}
\end{defn}
\begin{defn} \label{rate-distortion-wz}
The rate distortion function is $R(D)\triangleq \inf_{\{(R,D) \text{ is achievable}\}} R$.
\end{defn}

The above mathematical formulation is illustrated in Fig. \ref{setup_wz}.
\begin{figure}
  \centering
\begin{tikzpicture}
[node distance=1cm,minimum height=10mm,minimum width=14mm,arw/.style={->,>=stealth'}]
  \node[coordinate] (source) {};
  \node[rectangle,draw,rounded corners] (encoder) [right =of source] {Encoder $f_n$};
  \node[rectangle,draw,rounded corners] (decoder) [right =2cm of encoder] {Decoder $g_n$};
  \node[coordinate] (sink)[right =of decoder] {};
  \node[coordinate] (side)[above =0.5cm of decoder] {};
    
  \draw [arw] (source) to node[midway,above]{$X^n$} (encoder);
  \draw [arw] (encoder) to node[midway,above]{$M$} (decoder);  
  \draw [arw] (decoder) to node[midway,above]{$Y^n$} (sink);
  \draw [arw] (side) to node[midway,above]{$Z^n$} (decoder);
  
\end{tikzpicture}
\caption{Rate-distortion theory for source coding with side information at the decoder -- the Wyner-Ziv problem}
\label{setup_wz}

\end{figure}
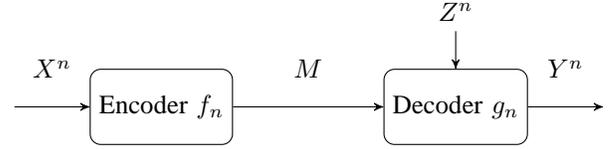

As mentioned previously, the solution to this source coding problem is given in \cite{wz}. The rate-distortion function with side information at the decoder is 
\begin{eqnarray}
R(D)&=& \min_{P_{V|XZ}\in \mathcal{M}(D)} I(X;V|Z), \label{rate}
\end{eqnarray}
where 
\begin{eqnarray}
\mathcal{M}(D)&=&\bigg\{P_{V|XZ}: V\inout X\inout Z, \nonumber\\
&&|\Vcal|\leq|\Xcal|+1, \nonumber\\
&&\text{and there exists }\text{a function } \phi \text{ s.t. }\nonumber\\
&&\mathbb{E}\left[d(X,Y)\right]\leq D ,Y\triangleq \phi(V,Z) \bigg\}. \label{md}
\end{eqnarray}

\subsection{Achievability Proof Using the Likelihood Encoder}
Before going into the main proof, let us first establish a property of total variation that will be helpful for both the Wyner-Ziv problem and the Berger-Tung inner bound.
\begin{lem} \label{helper}
For a distribution $P_{UVX}\in\Delta_{\Ucal\times\Ucal\times \Xcal}$ and $0<\eps<1$, if $\mathbb{P}[U\neq V]\leq \eps$, we have 
\begin{eqnarray}
\lVert P_{UX}-P_{VX}\rVert_{TV}\leq \eps.
\end{eqnarray}
\end{lem}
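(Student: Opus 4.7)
The plan is to prove this by a standard coupling argument, exploiting the fact that $P_{UVX}$ itself provides a natural coupling between $(U,X) \sim P_{UX}$ and $(V,X) \sim P_{VX}$ in which the $X$-component is shared. The key observation is that under this coupling, $(U,X) \neq (V,X)$ if and only if $U \neq V$, so the probability that the two coupled random vectors disagree is at most $\eps$ by hypothesis.

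Concretely, I would start from the definitional form of total variation distance,
\begin{eqnarray}
\lVert P_{UX} - P_{VX} \rVert_{TV} = \sup_{\Acal\subseteq\Ucal\times\Xcal} |P_{UX}(\Acal) - P_{VX}(\Acal)|,
\end{eqnarray}
and for an arbitrary measurable $\Acal$ write the difference of probabilities as an expectation under the joint distribution $P_{UVX}$:
\begin{eqnarray}
P_{UX}(\Acal) - P_{VX}(\Acal) = \Ebb\left[\mathbbm{1}\{(U,X)\in\Acal\} - \mathbbm{1}\{(V,X)\in\Acal\}\right].
\end{eqnarray}
Taking absolute values inside the expectation and observing that the integrand vanishes on the event $\{U=V\}$ (since then $(U,X)=(V,X)$), I would bound this by $\Pbb[U\neq V]\leq \eps$, uniformly in $\Acal$. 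Taking the supremum over $\Acal$ yields the claim.

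I do not expect any serious obstacle here: the whole argument is just the coupling inequality applied to the obvious coupling induced by $P_{UVX}$. The only thing to be mildly careful about is notation, since $U$ and $V$ take values in the same alphabet $\Ucal$, so one should phrase the argument in terms of joint events rather than in terms of pmfs to avoid clumsy marginalization identities. The hypothesis $0<\eps<1$ plays no essential role in the bound itself and is presumably stated only so that the conclusion is nontrivial.
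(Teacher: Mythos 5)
Your proposal is correct and follows essentially the same route as the paper: both arguments work directly from the supremum definition of total variation and use $P_{UVX}$ as a coupling, observing that $(U,X)\in\Acal$ and $(V,X)\notin\Acal$ (or vice versa) forces $U\neq V$, so the difference of probabilities is bounded by $\Pbb[U\neq V]\leq\eps$ uniformly in $\Acal$. The only cosmetic difference is that you pass the absolute value inside the expectation of the indicator difference, whereas the paper bounds one-sided differences via a set-inclusion manipulation; the underlying coupling inequality is identical.
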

\begin{proof}
By definition,
\begin{eqnarray}
&&\lVert P_{UX}-P_{VX}\rVert_{TV}\nonumber\\
&=&\sup_{\mathcal{A\in\mathcal{F}}}\left\vert\mathbb{P}[(U,X)\in \mathcal{A}]-\mathbb{P}[(V,X)\in \mathcal{A}]\right\vert,
\end{eqnarray}
where $\mathcal{F}$ is the sigma-algebra on which $P_{UX}$ and $P_{VX}$ are defined and $\mathcal{A}$ represents a subset on the sample space $\Ucal\times\Xcal$.

Since for every $\mathcal{A} \in \mathcal{F}$
\begin{eqnarray}
&&\left\vert\mathbb{P}[(U,X)\in \mathcal{A}]-\mathbb{P}[(V,X)\in \mathcal{A}]\right\vert \nonumber\\
&\leq& \mathbb{P}[(U,X)\in \mathcal{A}]-\mathbb{P}[(V,X)\in \mathcal{A}, (U,X)\in  \mathcal{A}]\\
&=&\mathbb{P}[(U,X)\in \mathcal{A}, (V,X)\notin  \mathcal{A}]\\
&\leq&\mathbb{P}[U\neq V]\\
&\leq&\epsilon,
\end{eqnarray}
we have
\begin{eqnarray}
\sup_{\mathcal{A\in\mathcal{F}}}\left\vert\mathbb{P}[(U,X)\in \mathcal{A}]-\mathbb{P}[(V,X)\in \mathcal{A}]\right\vert\leq \epsilon.
\end{eqnarray}
\end{proof}

Here again to be consistent, we reserve $P$ for the system-induced distribution, with bold $\Pbf$ indicating that the distribution itself is random with respect to the random codebook. We replace the single-letter distributions appearing in the right-hand side of $(\ref{rate})$ and $(\ref{md})$ with $\Pbar$ and any marginal or conditional distributions derived from the joint single-letter distribution $\Pbar_{XZV}$. Note that $\Pbar_{XZ}=P_{XZ}$. We use $\Pbar_{X^nZ^nV^n}$ to denote an i.i.d. distribution, i.e.
\begin{eqnarray}
\Pbar_{X^nZ^nV^n}=\prod_{t=1}^n\Pbar_{XZV}.
\end{eqnarray}

\subsubsection{High-level outline}
We are now ready to give the achievability proof of $(\ref{rate})$. We introduce a virtual message which is produced by the encoder but not physically transmitted to the receiver so that this virtual message together with the actual message gives a high enough rate for applying the soft-covering lemma. Then we show that this virtual message can be reconstructed with vanishing error probability at the decoder by using the side information. This is analogous to the technique of random binning, where the index of the codeword within the bin is equivalent to the virtual message in our method.

Our proof technique again involves showing that the behavior of the system is approximated by a well-behaved distribution. The soft-covering lemma and channel decoding error bounds are used to analyze how well the approximating distribution matches the system.

Let $R>R(D)$ claimed in (\ref{rate}). We prove that $R$ is achievable for distortion $D$. Let $M'$ be a virtual message with rate $R'$ which is not physically transmitted. By the rate-distortion formula in $(\ref{rate})$, we can fix $R'$ and $\Pbar_{V|XZ}\in\mathcal{M}(D)$ ($\Pbar_{V|XZ}=\Pbar_{V|X}$) such that $R+R'>I_{\Pbar}(X;V)$ and $R'<I_{\Pbar}(V;Z)$, and there exists a function $\phi(\cdot,\cdot)$ yielding $Y= \phi(V,Z)$ and $\mathbb{E}\left[d(X,Y)\right]\leq D$. We will use the likelihood encoder derived from $\Pbar_{XV}$ and a random codebook $\{v^n(m,m')\}$ generated according to $\Pbar_V$ to prove the result. The decoder will first use the transmitted message $M$ and the side information $Z^n$ to decode $M'$ as $\hat{M}'$ and reproduce $v^n(M,\hat{M}')$. Then the reconstruction $Y^n$ is produced as a symbol-by-symbol application of $\phi(\cdot,\cdot)$ to $Z^n$ and $V^n$.

The distribution induced by the source, side information, encoder and decoder is
\begin{eqnarray}
&&\Pbf_{X^nZ^nMM'\hat{M}'Y^n}(x^n,z^n,m,m',\hat{m}',y^n)\nonumber\\
&=& P_{X^nZ^n}(x^n,z^n)\Pbf_{MM'|X^n}(m,m'|x^n) \nonumber\\
&&\Pbf_{\hat{M}'|MZ^n}(\hat{m}'|m,z^n)\Pbf_{Y^n|M\hat{M}'Z^n}(y^n|m,\hat{m}',z^n) \label{jointPP}\\
&\triangleq& P_{X^nZ^n}(x^n,z^n) \Pbf_{LE}(m,m'|x^n)\nonumber\\
&& \Pbf_D(\hat{m}'|m,z^n) \Pbf_\Phi(y^n|m,\hat{m}',z^n), \label{jointPP2}
\end{eqnarray}
where $\Pbf_{LE}(m,m'|x^n)$ is the likelihood encoder; $\Pbf_D(\hat{m}'|m,z^n)$ is the first part of the decoder that decodes $m'$ as $\hat{m}'$; and $\Pbf_\Phi(y^n|m,\hat{m}',z^n)$ is the second part of the decoder that reconstructs the source sequence. 

Two approximating distributions are used. The first is a uniform distribution over the pairs of messages $(m,m')$ with a memoryless channel from the associated codeword to the source sequence according to $\Pbar_{XZ|V}$. The second allows the decoder to form a reconstruction based on the $m'$ selected by the encoder rather than its own estimate of it.
\subsubsection{Proof}
We now concisely restate the behavior of the encoder and decoder, as these components of the system-induced distribution.

{\textbf{Codebook generation}}: We independently generate $2^{n(R+R')}$ sequences in $\mathcal{V}^n$ according to $\prod_{i=1}^n \Pbar_V(v_i)$ and index by $(m,m')\in\{1, \ldots ,2^{nR}\}\times\{1, \ldots ,2^{nR'}\}$. We use $\Ccal^{(n)}$ to denote the random codebook.

{\textbf{Encoder}}: The encoder $\Pbf_{LE}(m,m'|x^n)$ is the likelihood encoder that chooses $M$ and $M'$ stochastically with probability proportional to the likelihood function given by 
\begin{eqnarray}
\Lcal(m,m'|x^n)=\Pbar_{X^n|V^n}(x^n|V^n(m,m')).
\end{eqnarray}
Then the encoder sends $M$.

{\textbf{Decoder}}: The decoder receives $M$ and the side information $Z^n$ and has two decoding steps. In the first step, the decoder reconstructs $M'$ as $\hat{M}'$: let $\Pbf_D(\hat{m}'|m,z^n)$ be a good channel decoder (e.g. the maximum likelihood decoder) with respect to the sub-codebook ${\Ccal^{(n)}}(m)=\{v^n(m,a)\}_a$ and the memoryless channel $\Pbar_{Z|V}$. In the second step, the decoder forms the reconstruction of the source: let $\phi(\cdot,\cdot)$ be the function corresponding with the choice of $\Pbar_{V|XZ}$ in $(\ref{md})$; that is, $Y=\phi(V,Z)$ and $\Ebb_{\Pbar}\left[d(X,Y)\right]\leq D$. Define $\phi^n(v^n, z^n)$ as the concatenation $\{\phi(v_t,z_t)\}_{t=1}^n$ and set the decoder $\Pbf_\Phi$ to be the deterministic function
\begin{eqnarray}
\Pbf_\Phi(y^n|m,\hat{m}',z^n)\triangleq \mathbbm{1}\{y^n=\phi^n(V^n(m,\hat{m}'),z^n)\}.
\end{eqnarray}

{\textbf{Analysis:}} We consider three distributions for the analysis, the induced distribution $\Pbf$ and two approximating distributions $\Qbf^{(1)}$ and $\Qbf^{(2)}$. The idea is to show that 1) the system behaves well under $\Qbf^{(2)}$; and 2) $\Pbf$ and $\Qbf^{(2)}$ are close in total variation (on average over the random codebook) through $\Qbf^{(1)}$. 

The first approximating distribution, $\Qbf^{(1)}$, changes the distribution induced by the likelihood encoder to a distribution based on a reverse memoryless channel, as in the proof of point-to-point rate-distortion theory, and shown in Fig. \ref{auxiliary}. This is shown to be a good approximation using the soft-covering lemma. The second approximating distribution, $\Qbf^{(2)}$, pretends that $M'$, the index which is not transmitted, is used by the decoder to form the reconstruction. This is a good approximation because the decoder can accurately estimate $M'$.
\begin{figure}
  \centering
\begin{tikzpicture}
[node distance=1cm,minimum height=12mm,minimum width=14mm,arw/.style={->,>=stealth'}]
  \node[coordinate] (mm)[yshift=4mm] {};
  \node[coordinate] (mp)[yshift=-4mm] {};
  \node[rectangle,draw,rounded corners] (encoder) [right =of source] {$\Ccal^{(n)}$};
  \node[rectangle,draw,rounded corners] (decoder) [right =3cm of encoder] {$\Pbar_{XZ|V}$};
  \node[coordinate](xn) [right =of decoder, yshift=4mm] {};
  \node[coordinate](bn) [right =of decoder, yshift=-4mm] {};
    
  \draw [arw] (mm) to node[above, yshift=-1mm]{$M$} (mm -| encoder.west);
  \draw [arw] (mp) to node[below, yshift=1mm]{$M'$} (mp -| encoder.west);
  \draw [arw] (encoder) to node[midway,above]{$V^n(M,M')$} (decoder);  
  \draw [arw] (decoder.east |- xn) to node[above,yshift=-1mm]{$X^n$} (xn);
  \draw [arw] (decoder.east |- bn) to node[below,yshift=1mm]{$Z^n$} (bn);
\end{tikzpicture}
\caption{Idealized distribution with test channel $\Pbar_{XZ|V}$}
\label{auxiliary}
\end{figure}
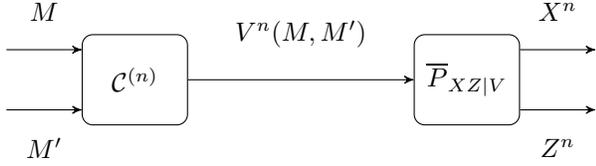

Both approximating distributions $\Qbf^{(1)}$ and $\Qbf^{(2)}$ are built upon the idealized marginal distribution over the information sources and messages, according to the test channel, as shown in Fig. \ref{auxiliary}. Note that this idealized distribution $\Qbf$ is no different from the one we considered for the point-to-point case, except for there being two message indices and two channel outputs. 
The joint distribution under $\Qbf$ in Fig. \ref{auxiliary} can be written as
\begin{eqnarray}
&&\mathbf{Q}_{X^nZ^nV^nMM'}(x^n,z^n,v^n,m,m')\nonumber\\
&=&Q_{MM'}(m,m')\Qbf_{V^n|MM'}(v^n|m,m')\nonumber\\
&&\mathbf{Q}_{X^nZ^n|MM'}(x^n,z^n|m,m')\label{n5}\\
&=&\frac{1}{2^{n(R+R')}}\mathbbm{1}\{v^n=V^n(m,m')\}\nonumber\\
&&\prod_{t=1}^n\Pbar_{XZ|V}(x_t,z_t|V_t(m,m'))\\
&=&\frac{1}{2^{n(R+R')}}\mathbbm{1}\{v^n=V^n(m,m')\}\nonumber\\
&&\prod_{t=1}^n\Pbar_{X|V}(x_t|v_t)\Pbar_{Z|X}(z_t|x_t), \label{markovchain}
\end{eqnarray}
where $(\ref{markovchain})$ follows from the Markov chain under $\Pbar$, $V\inout X\inout Z$. Note that by using the likelihood encoder, the idealized distribution $\Qbf$ satisfies
\begin{eqnarray}
\mathbf{Q}_{MM'|X^nZ^n}(m,m'|x^n,z^n)= \mathbf{P}_{LE}(m,m'|x^n). \label{enc}
\end{eqnarray}
Furthermore, using the same technique as $(\ref{expectation})$ and $(\ref{distortion-iid})$ given in the previous section, it can be verified that
\begin{eqnarray}
\mathbb{E}_{\mathcal{C}^{(n)}}\left[\mathbf{Q}_{X^nZ^nV^n}(x^n,z^n,v^n)\right]=\Pbar_{X^nZ^nV^n}(x^n,z^n,v^n). \label{expectationQ}
\end{eqnarray}
Consequently, 
\begin{eqnarray}
&&\mathbb{E}_{\mathcal{C}^{(n)}}\left[\Ebb_{\Qbf}\left[d\left(X^n,\phi^n(V^n,Z^n)\right)\right]\right]\nonumber\\
&=&\Ebb_{\Pbar}\left[d\left(X^n,\phi^n(V^n,Z^n)\right)\right].
\end{eqnarray}

Define the two distributions $\Qbf^{(1)}$ and $\Qbf^{(2)}$ based on $\Qbf$ as follows:
\begin{eqnarray}
&&\mathbf{Q}^{(1)}_{X^nZ^nMM'\hat{M}'Y^n}(x^n,z^n,m,m',\hat{m}',y^n)\nonumber\\
&\triangleq&\mathbf{Q}_{X^nZ^nMM'}(x^n,z^n,m,m') \Pbf_D(\hat{m}'|m,z^n)\nonumber\\
&&\Pbf_\Phi(y^n|m,\hat{m}',z^n) \label{Q1}\\
\nonumber\\
&&\mathbf{Q}^{(2)}_{X^nZ^nMM'\hat{M}'Y^n}(x^n,z^n,m,m',\hat{m}',y^n)\nonumber\\
&\triangleq&\mathbf{Q}_{X^nZ^nMM'}(x^n,z^n,m,m') \Pbf_D (\hat{m}'|m,z^n)\nonumber\\
&&\Pbf_\Phi(y^n|m,m',z^n). \label{Q2}
\end{eqnarray}
Notice that $\Qbf^{(2)}$ differs from $\Qbf^{(1)}$ by allowing the decoder to use $m'$ rather than $\hat{m}'$ when forming its reconstruction through $\phi^n$.

Therefore, on account of $(\ref{expectationQ})$, 
\begin{eqnarray}
\mathbb{E}_{\mathcal{C}^{(n)}} \left[\mathbf{Q}^{(2)}_{X^nZ^nY^n}(x^n,z^n,y^n)\right]=\Pbar_{X^nZ^nY^n}(x^n,z^n,y^n).
\end{eqnarray}

Now applying the soft-covering lemma, since $R+R'>I_{\Pbar}(X;V)=I_{\Pbar}(Z,X;V)$, we have
\begin{eqnarray}
\mathbb{E}_{\mathcal{C}^{(n)}}\left[\lVert\Pbar_{X^nZ^n}-\mathbf{Q}_{X^nZ^n}\rVert_{TV}\right]\leq \epsilon_n\rightarrow_n 0.
\end{eqnarray}
And with $(\ref{jointPP2})$, $(\ref{enc})$, $(\ref{Q1})$ and Property \ref{property-tv}$(\ref{d})$, we obtain
\begin{eqnarray}
&&\mathbb{E}_{\mathcal{C}^{(n)}}\left[\lVert \mathbf{P}_{X^nZ^nMM'\hat{M}'Y^n}-\mathbf{Q}^{(1)}_{X^nZ^nMM'\hat{M}'Y^n}\rVert_{TV}\right]\nonumber\\
&=&\mathbb{E}_{\mathcal{C}^{(n)}}\left[\lVert\Pbar_{X^nZ^n}-\mathbf{Q}_{X^nZ^n}\rVert_{TV}\right]\\
&\leq& \epsilon_n. \label{PtoQ1}
\end{eqnarray}

Since by construction
$\mathbf{Q}^{(1)}_{X^nZ^nMM'\hat{M}'}=\mathbf{Q}^{(2)}_{X^nZ^nMM'\hat{M}'}$, 
\begin{eqnarray}
\Pbb_{\mathbf{Q}^{(1)}}[\hat{M}'\neq M']=\Pbb_{\mathbf{Q}^{(2)}}[\hat{M}'\neq M'].
\end{eqnarray}
Also, since $R'<I(V;Z)$, the codebook is randomly generated, and $M'$ is uniformly distributed under $Q$, it is well known that the maximum likelihood decoder $\Pbf_D$ (as well as a variety of other decoders) will drive the error probability to zero as $n$ goes to infinity. This can be seen from Fig. \ref{auxiliary}, by identifying, for fixed $M$, that $M'$ is the message to be transmitted over the memoryless channel $\Pbar_{Z|V}$. Therefore,
\begin{eqnarray}
\mathbb{E}_{\mathcal{C}^{(n)}}\left[\mathbb{P}_{\mathbf{Q}^{(1)}}[M'\neq \hat{M}']\right]\leq \delta_n\rightarrow_n 0.
\end{eqnarray}
Applying Lemma \ref{helper}, we obtain
\begin{eqnarray}
&&\Ebb_{\mathcal{C}^{(n)}} \left[\lVert \mathbf{Q}^{(1)}_{X^nZ^nM\hat{M}'}-\mathbf{Q}^{(2)}_{X^nZ^nMM'} \rVert_{TV}\right]\nonumber\\
&\leq&\mathbb{E}_{\mathcal{C}^{(n)}}\left[\Pbb_{\mathbf{Q}^{(1)}}[\hat{M}'\neq M']\right]
\leq\delta_n.
\end{eqnarray}
Thus by Property \ref{property-tv}$(\ref{d})$ and definitions $(\ref{Q1})$ and $(\ref{Q2})$,
\begin{eqnarray}
\Ebb_{\mathcal{C}^{(n)}} \left[\lVert \mathbf{Q}^{(1)}_{X^nZ^nM\hat{M}'Y^n}-\mathbf{Q}^{(2)}_{X^nZ^nMM'Y^n} \rVert_{TV}\right]\leq \delta_n. \label{Q1toQ2}
\end{eqnarray}
Combining $(\ref{PtoQ1})$ and $(\ref{Q1toQ2})$ and using Property \ref{property-tv}$(\ref{c})$ and \ref{property-tv}$(\ref{e})$, we have
\begin{eqnarray}
\Ebb_{\mathcal{C}^{(n)}}\left[ \lVert \mathbf{P}_{X^nY^n}-\mathbf{Q}^{(2)}_{X^nY^n}\rVert_{TV}\right]
\leq \epsilon_n+\delta_n\label{PtoQ2}
\end{eqnarray}
where $\epsilon_n$ and $\delta_n$ are the error terms introduced from the soft-covering lemma and channel coding, respectively.

Repeating the same steps as $(\ref{mm1})$ through $(\ref{n3})$ on $\Pbf$, $\mathbf{Q}^{(2)}$, and $\Pbar$, we obtain
\begin{eqnarray}
&&\Ebb_{\Ccal^{(n)}}\left[ \Ebb_{\mathbf{P}}[d(X^n,Y^n)]\right]\nonumber\\
&\leq&\Ebb_{\Pbar}\left[d(X,Y)\right]+d_{max}(\epsilon_n+\delta_n). \label{endp}
\end{eqnarray}
Taking the limit on both sides gives us
\begin{eqnarray}
\ls\Ebb_{\Ccal^{(n)}}\left[ \Ebb_{\mathbf{P}}[d(X^n,Y^n)]\right]\leq D.
\end{eqnarray}
Therefore, there exists a codebook satisfying the requirement. \QEDA

\section{The Berger-Tung Inner Bound} \label{bt}
The application of the likelihood encoder can go beyond single-user communications. In this section, we will demonstrate the use of the likelihood encoder via an alternative proof for achieving the Berger-Tung inner bound for the problem of multi-terminal source coding. Notice that no Markov lemma is needed in this proof. Similar to the single-user case, the key is to identify an auxiliary distribution that has nice properties and show that the system-induced distribution and the auxiliary distribution we choose are close in total variation. 

\subsection{Problem Setup and Result Review}
We now consider a pair of correlated sources $({X_1}^n,{X_2}^n)$, distributed i.i.d. according to $({{X_1}}_t,{X_2}_t)\sim P_{{X_1}{X_2}}$, independent encoders, and a joint decoder, satisfying the following constraints:
\begin{itemize}
\item Encoder 1 ${f_1}_n: {\mathcal{X}_1}^n \mapsto \mathcal{M}_1$ (possibly stochastic);
\item Encoder 2 ${f_2}_n: {\mathcal{X}_2}^n \mapsto \mathcal{M}_2$ (possibly stochastic);
\item Decoder $g_n: \mathcal{M}_1\times \mathcal{M}_2 \mapsto {\mathcal{Y}_1}^n\times {\mathcal{Y}_2}^n$ (possibly stochastic);
\item Compression rates: $R_1, R_2$, i.e. $|\mathcal{M}_1|=2^{nR_1}$, $|\mathcal{M}_2|=2^{nR_2}$.
\end{itemize}
The system performance is measured according to the time-averaged distortion (as defined in the notation section): 
\begin{itemize}
\item $d_1({X_1}^n, {Y_1}^n) =\frac1n\sum_{t=1}^n d_1({X_1}_t,{Y_1}_t),$
\item $d_2({X_2}^n, {Y_2}^n) =\frac1n\sum_{t=1}^n d_2({X_2}_t,{Y_2}_t),$

where $d_1(\cdot,\cdot)$ and $d_2(\cdot,\cdot)$ can be different distortion measures.
\end{itemize}
\begin{defn}
$(R_1,R_2)$ is achievable under distortion level $(D_1,D_2)$ if there exists a sequence of rate $(R_1,R_2)$ encoders and decoder $({f_1}_n, {f_2}_n, g_n)$ such that
\begin{eqnarray}
\ls\mathbb{E}[d_1({X_1}^n, {Y_1}^n)] \leq D_1,\\
\ls\mathbb{E}[d_2({X_2}^n, {Y_2}^n)] \leq D_2.
\end{eqnarray}
\end{defn}

The above mathematical formulation is illustrated in Fig. \ref{setup-bt}.

The achievable rate region is not yet known in general. But an inner bound, reproduced below, was given in \cite{tung} and \cite{berger1977} and is known as the Berger-Tung inner bound. The rates $(R_1,R_2)$ are achievable if
\begin{eqnarray}
R_1&>& I(X_1;U_1|U_2), \label{rate1}\\
R_2&>& I(X_2;U_2|U_1), \label{rate2}\\
R_1+R_2&>& I(X_1,X_2;U_1,U_2) \label{rate12}
\end{eqnarray}
for some $P_{U_1X_1X_2U_2}=P_{X_1X_2}P_{U_1|X_1}P_{U_2|X_2}$, and functions $\phi_k(\cdot,\cdot)$ such that $\mathbb{E}[d_k(X_k,Y_k)]\leq D_k$, where $Y_k\triangleq \phi_k(U_1,U_2), k=1,2$. \footnote{This region, after optimizing over auxiliary variables, is in general not convex, so it can be improved to the convex hull through time-sharing.}

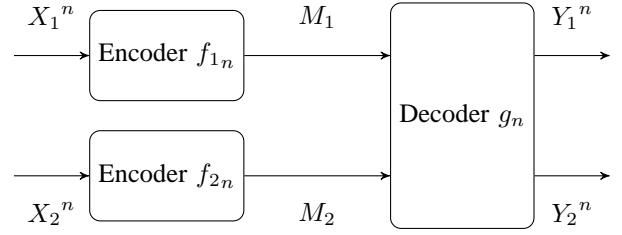
\begin{figure}
  \centering
\begin{tikzpicture}
[node distance=1cm,minimum height=12mm,minimum width=14mm,arw/.style={->,>=stealth'}]
  \node[coordinate] (source) {};
  \node[coordinate] (x1)[yshift=8mm] {};
  \node[coordinate] (x2)[yshift=-8mm] {};
  \node[rectangle,draw,rounded corners] (encoder1) [right =of x1] {Encoder ${f_1}_n$};
  \node[rectangle,draw,rounded corners] (encoder2) [right =of x2] {Encoder ${f_2}_n$};  
  
  \node[rectangle,draw,rounded corners, minimum height=30mm] (decoder) [right =5cm of source] {Decoder $g_n$};
  \node[coordinate](y1) [right =of decoder, yshift=8mm] {};
  \node[coordinate](y2) [right =of decoder, yshift=-8mm] {};
    
  \draw [arw] (x1) to node[above, yshift=-1mm]{${X_1}^n$} (x1 -| encoder1.west);
  \draw [arw] (x2) to node[below, yshift=1mm]{${X_2}^n$} (x2 -| encoder2.west);
  \draw [arw] (encoder1) to node[above, yshift=-1mm]{$M_1$} (encoder1 -| decoder.west);
  \draw [arw] (encoder2) to node[below, yshift=1mm]{$M_2$} (encoder2 -| decoder.west);
   
  \draw [arw] (decoder.east |- y1) to node[above,yshift=-1mm]{${Y_1}^n$} (y1);
  \draw [arw] (decoder.east |- y2) to node[below,yshift=1mm]{${Y_2}^n$} (y2);
\end{tikzpicture}
\caption{Berger-Tung problem setup}
\label{setup-bt}
\end{figure}

\subsection{Achievability Proof Using the Likelihood Encoder} \label{bt-proof}
We keep the same convention of using $P$ to denote the system-induced distribution and using $\Pbar$ for the distribution selected to optimize $(\ref{rate1})$-$(\ref{rate12})$ and any marginal or conditional distributions derived from it. Notice that $\Pbar_{X_1X_2}=P_{X_1X_2}$. We use $\Pbar_{{U_1}^n{X_1}^n{X_2}^n{U_2}^n}$ to denote the i.i.d. distribution, i.e.
\begin{eqnarray}
\Pbar_{{U_1}^n{X_1}^n{X_2}^n{U_2}^n}=\prod_{t=1}^n\Pbar_{U_1X_1X_2U_2}.
\end{eqnarray}

For simplicity, we focus on the corner points, $C_1\triangleq \left(I_{\Pbar}(X_1;U_1),I_{\Pbar}(X_2;U_2|U_1)\right)$ and $C_2\triangleq \left(I_{\Pbar}(X_1;U_1|U_2),I_{\Pbar}(X_2;U_2)\right)$, of the region given in $(\ref{rate1})$ through $(\ref{rate12})$ and use convexity to claim the complete region. Below we demonstrate how to achieve $C_1$. The point $C_2$ follows by symmetry.

\subsubsection{High-level outline}
Fix a $\Pbar_{U_1U_2|X_1X_2}=\Pbar_{U_1|X_1}\Pbar_{U_2|X_2}$ and functions ${\phi_k}(\cdot,\cdot)$ such that $Y_k={\phi_k}(U_1,U_2)$ and $\Ebb_{\Pbar}\left[d_k(X_k,Y_k)\right]<D_k$.  Note that $U_1\inout X_1\inout X_2\inout U_2$ forms a Markov chain under $\Pbar$. We must show that any rate pair $(R_1,R_2)$ satisfying $R_1>I_{\Pbar}(X_1;U_1)$ and $R_2>I_{\Pbar}(X_2;U_2|U_1)$ is achievable.

As expected, the decoder will use a lossy representation of one source as side information to assist reconstruction of the other source. We can choose an $R_2'<I_{\Pbar}(U_1;U_2)$ such that $R_2+R_2'>I_{\Pbar}(X_2;U_2)$. Here $R_2'$ corresponds to the rate of a virtual message $M_2'$ which is produced by Encoder 2 but not physically transmitted to the receiver. This will play the role of the index of the codeword in the bin in a traditional covering and random-binning proof. 

First we use the likelihood encoder derived from $\Pbar_{X_1U_1}$ and a random codebook $\{{u_1}^n(m_1)\}$ generated according to $\Pbar_{U_1}$ for Encoder 1. Then we use the likelihood encoder derived from $\Pbar_{X_2U_2}$ and another random codebook $\{{u_2}^n(m_2,m_2')\}$ generated according to $\Pbar_{U_2}$ for Encoder 2. The decoder uses the transmitted message $M_1$ to decode ${U_1}^n$, as in the point-to-point case,  and uses the transmitted message $M_2$ along with the decoded ${U_1}^n$ to decode $M_2'$ as $\hat{M}_2'$, as in the Wyner-Ziv case, and reproduces $u_2^n(M_2,\hat{M}_2')$. Finally, the decoder outputs the reconstructions ${Y_k}^n$ according to the symbol-by-symbol functions $\phi_k(\cdot,\cdot)$ of ${U_1}^n$ and ${U_2}^n$.

The distribution induced by the sources, the encoders and decoder is
\begin{eqnarray}
P_{{X_1}^n{X_2}^n{U_1}^nM_1M_2M_2'\hat{M}_2'{Y_1}^n{Y_2}^n}
=P_{{X_1}^n{X_2}^n}\Pbf_1\Pbf_2 \label{pjoint}
\end{eqnarray}
where
\begin{eqnarray}
&&\Pbf_1(m_1,{u_1}^n|{x_1}^n)\nonumber\\
&\triangleq& \Pbf_{M_1|{X_1}^n}(m_1|{x_1}^n)\Pbf_{{U_1}^n|M_1}({u_1}^n|m_1)\\
&\triangleq& \Pbf_{LE1}(m_1|{x_1}^n) \Pbf_{D1}({u_1}^n|m_1)
\end{eqnarray}
and
\begin{eqnarray}
&&\Pbf_2(m_2,m_2',\hat{m}_2',{y_1}^n,{y_2}^n|{x_2}^n,{u_1}^n)\nonumber\\
&\triangleq& \Pbf_{M_2M_2'|{X_2}^n}(m_2,m_2'|{x_2}^n)\Pbf_{\hat{M}_2'|M_2{U_1}^n}(\hat{m}_2'|m_2,{u_1}^n)\nonumber\\
&&\prod_{k=1,2}P_{{Y_k}^n|{U_1}^nM_2\hat{M}_2'}({y_k}^n|{u_1}^n,m_2,\hat{m}_2')\\
&\triangleq& \Pbf_{LE2}(m_2,m_2'|{x_2}^n) \Pbf_{D2}(\hat{m}_2'|m_2,{u_1}^n)\nonumber\\
&&\prod_{k=1,2} \Pbf_{\Phi, k}({y_k}^n|{u_1}^n,m_2,\hat{m}_2'), \label{p2}
\end{eqnarray}
where $\Pbf_{LE1}$ and $\Pbf_{LE2}$ are the likelihood encoders; $\Pbf_{D1}$ is the first part of the decoder that does a codeword lookup on $\Ccal_1^{(n)}$; $\Pbf_{D2}$ is the second part of the decoder that decodes $m_2'$ as $\hat{m}_2'$; and $\Pbf_{\Phi,k}({y_k}^n|{u_1}^n,m_2,\hat{m}_2')$ is the third part of the decoder that reconstructs the source sequences.

The analysis mimics the point-to-point analysis (Section \ref{p2p}) for $\Pbf_1$ and the Wyner-Ziv analysis (Section \ref{sec-wz}) for $\Pbf_2$.

\subsubsection{Proof}
We now restate the behavior of the encoders and decoder -- components of the induced distribution stated in $(\ref{pjoint})$-$(\ref{p2})$. These are derived from the distribution $\Pbar_{U_1X_1X_2U_2}$ and $\phi_1(\cdot, \cdot)$ and $\phi_2(\cdot, \cdot)$ stated in the outline.

{\textbf{Codebook generation}}: We independently generate $2^{nR_1}$ sequences in ${\mathcal{U}_1}^n$ according to $\prod_{t=1}^n\Pbar_{U_1}({u_1}_t)$ and index them by $m_1\in\{1, \ldots ,2^{nR_1}\}$, and independently generate $2^{n(R_2+R_2')}$ sequences in ${\mathcal{U}_2}^n$ according to $\prod_{t=1}^n\Pbar_{U_2}({u_2}_t)$ and index them by $(m_2,m_2')\in\{1, \ldots ,2^{nR_2}\}\times\{1, \ldots, 2^{nR_2'}\}$. We use $\Ccal_1^{(n)}$ and $\Ccal_2^{(n)}$ to denote the two random codebooks, respectively. 

{\textbf{Encoders}}: The first encoder $\Pbf_{LE1}(m_1|{x_1}^n)$ is the likelihood encoder according to $\Pbar_{{X_1}^n{U_1}^n}$ and $\Ccal_1^{(n)}$. The second encoder $\Pbf_{LE2}(m_2,m_2'|{x_2}^n)$ is the likelihood encoder according to $\Pbar_{{X_2}^n{U_2}^n}$ and $\Ccal_2^{(n)}$. The first encoder sends $M_1$ and the second encoder sends $M_2$.

{\textbf{Decoder}}:  First, let $\Pbf_{D1}({u_1}^n|m_1)$ be a $\Ccal_1^{(n)}$ codeword lookup decoder. Then, let $\Pbf_{D2}(\hat{m}_2'|m_2,{u_1}^n)$ be a good channel decoder with respect to the sub-codebook $\Ccal_2^{(n)}(m_2)=\{{u_2}^n(m_2,a)\}_a$ and the memoryless channel $\Pbar_{U_1|U_2}$. Last, define ${\phi_k}^n({u_1}^n,{u_2}^n)$ as the concatenation $\{{\phi_k}({u_1}_t,{u_2}_t)\}_{t=1}^n$ and set the decoders $\Pbf_{\Phi,k}$ to be the deterministic functions
\begin{eqnarray}
&&\Pbf_{\Phi,k}({y_k}^n|{u_1}^n,m_2,\hat{m}_2')\nonumber\\
&\triangleq&\mathbbm{1}\{{y_k}^n={{\phi_k}^n({u_1}^n,{U_2}^n(m_2,\hat{m}_2'))}\}.
\end{eqnarray}

{\textbf{Analysis}}: We need the following distributions: the induced distribution $\Pbf$ and auxiliary distributions $\Qbf_1$ and $\Qbf_1^*$. Encoder 1 makes $\Pbf$ and $\Qbf_1$ close in total variation. Distribution $\Qbf_1^*$ (random only with respect to the second codebook $\Ccal_2^{(n)}$) is the expectation of $\Qbf_1$ over the random codebook $\Ccal_1^{(n)}$. This is really the key step in the proof. By considering the expectation of the distribution with respect to $\Ccal_1^{(n)}$, we effectively remove Encoder 1 from the problem and turn the message from Encoder 1 into memoryless side information at the decoder. Hence, the two distortions (averaged over $\Ccal_1^{(n)}$) under $\Pbf$ are roughly the same as the distortions under $\Qbf_1^*$, which is a much simpler distribution. 
We then recognize $\Qbf_1^*$ as precisely $\Pbf$ in $(\ref{jointPP2})$ from the Wyner-Ziv proof of the previous section, with a source pair $(X_1,X_2)$, a pair of reconstructions $(Y_1,Y_2)$ and $U_1$ as the side information.

a) The auxiliary distribution $\Qbf_1$ takes the following form:
\begin{eqnarray}
{\mathbf{Q}_1}_{{X_1}^n{X_2}^n{U_1}^nM_1M_2M_2'\hat{M}_2'{Y_1}^n{Y_2}^n}={\mathbf{Q}_1}_{M_1{U_1}^n{X_1}^n {X_2}^n}\Pbf_2
\end{eqnarray}
where
\begin{eqnarray}
&&{\mathbf{Q}_1}_{M_1{U_1}^n{X_1}^n {X_2}^n}(m_1,{u_1}^n,{x_1}^n,{x_2}^n)\nonumber\\
&=&\frac{1}{2^{nR_1}}\mathbbm{1}\{{u_1}^n={U_1}^n(m_1)\}\Pbar_{{X_1}^n|{U_1}^n}({x_1}^n|{u_1}^n)\nonumber\\
&&\Pbar_{{X_2}^n|{X_1}^n}({x_2}^n|{x_1}^n).\label{qqq}
\end{eqnarray}
Note that $\Qbf_1$ is the idealized distribution with respect to the first message, as introduced in the point-to-point case. Hence, by the same arguments, since $R_1>I_{\Pbar}(X_1;U_1)$, using the soft-covering lemma, 
\begin{eqnarray}
\mathbb{E}_{\Ccal_1^{(n)}}\left[\lVert {\mathbf{Q}_1}-{\mathbf{P}} \rVert_{TV}\right]\leq {\epsilon_1}_n, \label{tv-layer1}
\end{eqnarray}
where $\Qbf_1$ and $\Pbf$ are distributions over random variables ${X_1}^n,{X_2}^n,{U_1}^n,M_1,M_2,M_2',\hat{M}_2',{Y_1}^n,{Y_2}^n$ and ${\epsilon_1}_n$ is the error term introduced from soft-covering lemma.

b) Taking the expectation over codebook $\Ccal_1^{(n)}$, we define
\begin{eqnarray}
&&{\Qbf^*_1}_{{X_1}^n{X_2}^n{U_1}^nM_2M_2'\hat{M}_2'{Y_1}^n{Y_2}^n}\nonumber\\
&\triangleq& \mathbb{E}_{\Ccal_1^{(n)}}\left[{\mathbf{Q}_1}_{{X_1}^n{X_2}^n{U_1}^nM_2M_2'\hat{M}_2'{Y_1}^n{Y_2}^n}\right].
\end{eqnarray}

Note that under this definition of $\Qbf^*_1$, we have
\begin{eqnarray}
&&{\Qbf^*_1}_{{X_1}^n{X_2}^n{U_1}^nM_2M_2'\hat{M}_2'{Y_1}^n{Y_2}^n}\nonumber\\
&&({x_1}^n,{x_2}^n,{u_1}^n,m_2,m_2',\hat{m}_2',{y_1}^n,{y_2}^n)\nonumber\\
&=&\mathbb{E}_{\Ccal_1^{(n)}}\left[{\Qbf^*_1}_{{X_1}^n{X_2}^n{U_1}^n}({x_1}^n,{x_2}^n,{u_1}^n)\right]\nonumber\\
&&\Pbf_2(m_2,m_2',\hat{m}_2',{y_1}^n,{y_2}^n|{x_2}^n,{u_1}^n) \\
&=&\Pbar_{{X_1}^n{X_2}^n{U_1}^n}({x_1}^n,{x_2}^n,{u_1}^n)\nonumber\\
&&\Pbf_2(m_2,m_2',\hat{m}_2',{y_1}^n,{y_2}^n|{x_2}^n,{u_1}^n),
\end{eqnarray}
where the last step can be verified using the same technique as $(\ref{expectation})$ given in Section \ref{p2p}.

By Property \ref{property-tv}$(\ref{b})$,
\begin{eqnarray}
&&\mathbb{E}_{\Ccal_1^{(n)}}\left[ \Pbb_{\Pbf}\left[\left\{d_k({X_k}^n,{Y_k}^n)>D_k\right\}\right]\right]\nonumber\\
&=&\mathbb{E}_{\Ccal_1^{(n)}}\left[ \Ebb_{\mathbf{P}}\left[\mathbbm{1}\left\{d_k({X_k}^n,{Y_k}^n)>D_k\right\}\right]\right]\\
&\leq&\mathbb{E}_{\Ccal_1^{(n)}} \left[\Ebb_{\mathbf{Q}_1}[\mathbbm{1}\left\{d_k({X_k}^n,{Y_k}^n)>D_k\right\}]\right]+{\epsilon_1}_n\\
&=&\Ebb_{\Ccal_1^{(n)}}\bigg[\sum_{{x_k}^n,{y_k}^n}\Qbf_1({x_k}^n,{y_k}^n)\nonumber\\
&&\mathbbm{1}\left\{d_k({X_k}^n,{Y_k}^n)>D_k\bigg\}\right]+{\epsilon_1}_n\\
&=&\sum_{{x_k}^n,{y_k}^n}\Ebb_{\Ccal_1^{(n)}}[\Qbf_1({x_k}^n,{y_k}^n)]\nonumber\\
&&\mathbbm{1}\left\{d_k({X_k}^n,{Y_k}^n)>D_k\right\}+{\epsilon_1}_n\\
&=&\sum_{{x_k}^n,{y_k}^n}\Qbf_1^*({x_k}^n,{y_k}^n)\nonumber\\
&&\mathbbm{1}\left\{d_k({X_k}^n,{Y_k}^n)>D_k\right\}+{\epsilon_1}_n\\
&=&\Pbb_{\Qbf_1^*}\left[d_k({X_k}^n,{Y_k}^n)>D_k\right]+{\epsilon_1}_n. \label{rr1}
\end{eqnarray}

Note that $\Qbf^*_1$ is exactly of the form of the induced distribution $\Pbf$ in the Wyner-Ziv proof of the previous section, with the inconsequential modification that there are two reconstructions and two distortion functions, and working with the indicator functions instead of the distortion functions as in Section \ref{sec-excess}. Thus, by $(\ref{Q1})$ through $(\ref{endp})$, we obtain 
\begin{eqnarray}
&&\Ebb_{\Ccal_2^{(n)}} \left[\Pbb_{\mathbf{Q}^*_1}\left[d_k({X_k}^n,{Y_k}^n)>D_k\right]\right] \nonumber\\
&=&\Ebb_{\Ccal_2^{(n)}} \left[\Ebb_{\mathbf{Q}^*_1}\left[\mathbbm{1}\left\{d_k({X_k}^n,{Y_k}^n)>D_k\right\}\right]\right]\\
&\leq& \Ebb_{\Pbar}\left[\mathbbm{1}\left\{d_k({X_k}^n,{Y_k}^n)>D_k\right\}\right]+({\epsilon_2}_n+\delta_n)\\
&=& \Pbb_{\Pbar}[d_k({X_k}^n,{Y_k}^n)>D_k]+({\epsilon_2}_n+\delta_n),\label{DQstar}
\end{eqnarray}
where ${\epsilon_2}_n$ and $\delta_n$ are error terms introduced from the soft-covering lemma and channel decoding, respectively.

Combining $(\ref{rr1})$ and $(\ref{DQstar})$,
\begin{eqnarray}
&&\mathbb{E}_{\Ccal_2^{(n)}}\left[\mathbb{E}_{\Ccal_1^{(n)}}\left[\Pbb_{\Pbf}\left[\left\{d_k({X_k}^n,{Y_k}^n)>D_k\right\}\right]\right]\right]\nonumber\\
&\leq&\Ebb_{\Ccal_2^{(n)}} \left[\Pbb_{\mathbf{Q}^*_1}\left[d_k({X_k}^n,{Y_k}^n)>D_k\right]\right] +{\epsilon_1}_n\label{br1}\\
&\leq& \Pbb_{\Pbar}[d_k({X_k}^n,{Y_k}^n)>D_k]+({\epsilon_1}_n+{\epsilon_2}_n+\delta_n)\label{qqqq}
\end{eqnarray}
where $(\ref{br1})$ follows from $(\ref{rr1})$; $(\ref{qqqq})$ follows from $(\ref{rr1})$ and $(\ref{DQstar})$. 

Consequently,
\begin{eqnarray}
&&\Ebb_{\Ccal^{(n)}}[\Pbb_{\Pbf}[d_1({X_1}^n,{Y_1}^n)>D_1 \text{ or }\nonumber\\
&&\ \ \ \ \ \ \ \ \ \ \ \ d_2({X_2}^n,{Y_2}^n)>D_2]]\nonumber\\
&\leq&\Ebb_{\Ccal^{(n)}}\left[\sum_{k=1,2}\Pbb_{\Pbf}\left[d_k({X_k}^n,{Y_k}^n)>D_k\right]\right]\label{ub}\\
&=&\sum_{k=1,2}\Ebb_{\Ccal^{(n)}}\left[\Pbb_{\Pbf}\left[d_k({X_k}^n,{Y_k}^n)>D_k\right]\right]\\
&\leq&\sum_{k=1,2}\Pbb_{\Pbar}[d_k({X_k}^n,{Y_k}^n)>D_k]\nonumber\\
&&+2({\epsilon_1}_n+{\epsilon_2}_n+\delta_n)\\
&=&\sum_{k=1,2}\Pbb_{\Pbar}\left[\sum_{t=1}^nd_k({X_k}_t,{Y_k}_t)>D_k\right]\nonumber\\
&&+2({\epsilon_1}_n+{\epsilon_2}_n+\delta_n)\\
&\triangleq&\epsilon_n \rightarrow_n 0 \label{lln}
\end{eqnarray}
where $(\ref{ub})$ follows from the union bound and $(\ref{lln})$ follows from the law of large numbers. 

Therefore, there exists a codebook under which
\begin{eqnarray}
\Pbb_P\left[d_1({X_1}^n,{Y_1}^n)>D_1 \text{ or }d_2({X_2}^n,{Y_2}^n)>D_2\right]\leq \epsilon_n
\end{eqnarray}
which completes the proof under excess distortion. To get the bounds under average distortion, note that
\begin{eqnarray}
&&\Ebb_P\left[d_k({X_k}^n,{Y_k}^n)\right]\nonumber\\
&\leq& D_k \Pbb_P[d_1({X_1}^n,{Y_1}^n)\leq D_k]\nonumber\\
&&+ {d_k}_{max}\Pbb_P[d_1({X_1}^n,{Y_1}^n)>D_k]\\
&\leq&D_k+{d_k}_{max}\epsilon_n.
\end{eqnarray}
\QEDA

\begin{rem}
Note that the proof above uses the proof of Wyner-Ziv achievability from the previous section. To do the entire proof step by step, we would define a total of three auxiliary distributions, which would be the $\Qbf_1$ used in the proof, as well as  $\mathbf{Q}_2^{(1)}$ and $\mathbf{Q}_2^{(2)}$ defined below for completeness. The steps outlined above show how to relate the induced distribution $\Pbf$ to $\Qbf_1$ and its expectation $\Qbf_1^*$. This effectively converts the message from Encoder 1 into memoryless side information at the decoder. The omitted steps, as seen in the previous section, relate $\Qbf_1^*$ to $\mathbf{Q}_2^{(1)}$ through the soft-covering lemma and $\mathbf{Q}_2^{(1)}$ to $\mathbf{Q}_2^{(2)}$ through reliable channel decoding. The expected value of $\mathbf{Q}_2^{(2)}$ over codebooks is the desired distribution $\Pbar$. For reference, the omitted auxiliary distributions are
\begin{eqnarray}
&&{\mathbf{Q}_2}_{M_2M_2'{U_2}^n{X_2}^n{X_1}^n{U_1}^n}\nonumber\\
&=&\frac{1}{2^{n(R_2+R_2')}}\mathbbm{1}\{{u_2}^n={U_2}^n(m_2,m_2')\}\Pbar_{{X_2}^n|{U_2}^n}({x_2}^n|{u_2}^n)\nonumber\\
&&\Pbar_{{X_1}^n{U_1}^n|{X_2}^n}({x_1}^n,{u_1}^n|{x_2}^n),
\end{eqnarray}
which is of the same structure as the idealized distribution described in Fig. \ref{auxiliary}, and
\begin{eqnarray}
{\mathbf{Q}_2^{(1)}}_{{X_1}^n{X_2}^n{U_1}^nM_2M_2'\hat{M}_2'{Y_1}^n{Y_2}^n}\triangleq{\mathbf{Q}_2}_{{X_1}^n{X_2}^n{U_1}^nM_2M_2'}\nonumber\\
\Pbf_D(\hat{m}_2'|m_2,{u_1}^n)\prod_{k=1,2}\Pbf_{\Phi,k} ({y_k}^n|{u_1}^n,m_2,\hat{m}_2')\\
{\mathbf{Q}_2^{(2)}}_{{X_1}^n{X_2}^n{U_1}^nM_2M_2'\hat{M}_2'{Y_1}^n{Y_2}^n}\triangleq{\mathbf{Q}_2}_{{X_1}^n{X_2}^n{U_1}^nM_2M_2'}\nonumber\\
\Pbf_D(\hat{m}_2'|m_2,{u_1}^n)\prod_{k=1,2}\Pbf_{\Phi,k} ({y_k}^n|{u_1}^n,m_2,m_2').
\end{eqnarray}
\end{rem}

\begin{rem}
For comparison with the traditional joint typicality encoder proof, recall from \cite{network-it} that to bound the different error events, we would need the regular covering lemma, the conditional typicality lemma, the Markov lemma, and the mutual packing lemma, some of which are quite involved to verify. With the likelihood encoder, all we need is the soft-covering lemma and Lemma \ref{helper}. 
\end{rem}

\section{Non-asymptotic Analysis} \label{non-asymptotic}
In this section, we analyze the non-asymptotic performance of the likelihood encoder by evaluating how fast the excess distortion approaches zero. For brevity, we demonstrate the analysis only for the point-to-point case. 

Let the achievable rate-distortion region $\mathcal{R}$ be
\begin{eqnarray}
\mathcal{R}\triangleq \{(R,D): R>R(D)\}.
\end{eqnarray}

For a fixed $(R,D)\in\mathcal{R}$, we aim to minimize the probability of excess distortion (from Section \ref{sec-excess}), using a random codebook and the likelihood encoder, over valid choices of $\Pbar_{Y|X}$, and evaluate how fast the excess distortion decays with blocklength $n$ under the optimal $\Pbar_{Y|X}$. Mathematically, we want to obtain
\begin{eqnarray}
\inf_{\Pbar_{Y|X}}\Ebb_{\mathcal{C}^n}\left[\Pbb_{\Pbf}\left[d(X^n,Y^n)>D\right]\right],\label{opt}
\end{eqnarray}
where the subscript $\Pbf$ indicates probability taken with respect to the induced distribution.

To evaluate how fast the probability of excess distortion approaches zero, note in $(\ref{ww1})$ that the first term is governed (approximately) by the gap $D-\Ebb_{\Pbar}[d(X,Y)]$ and the second term is governed (approximately) by the the gap $R-I_{\Pbar}(X;Y)$. To see this, observe that for any $\beta>0$, 
\begin{eqnarray}
\epsilon_n'&\triangleq&\Pbb_{\Pbar}[d(X^n,Y^n)>D]\nonumber\\
&=&\Pbb_{\Pbar}\left[\frac1n\sum_{t=1}^n d(X_t,Y_t)>D\right]\label{cher1}\\
&\leq&\inf_{\beta>0}\left[\frac{\Ebb_{\Pbar}[2^{\beta d(X,Y)}]}{2^{\beta D}}\right]^n\label{cher}\\
&=&\exp\left(-n\log\left(\inf_{\beta>0}\Ebb_{\Pbar}\left[2^{\beta (d(X,Y)-D)}\right]\right)^{-1}\right)\label{jen}\\
&=& \exp\left(-n\eta(\Pbar_{Y|X})\right)\label{cher2}
\end{eqnarray}
where $(\ref{cher})$ follows from the Chernoff bound and we have implicitly defined 
\begin{eqnarray}
\eta(\Pbar_{Y|X})\triangleq \log \left(\inf_{\beta>0}\Ebb_{\Pbar}\left[2^{\beta (d(X,Y)-D)}\right]\right)^{-1}. \label{eta}
\end{eqnarray}
An upper bound on the second term in $(\ref{ww1})$ is given in \cite{cuff2012distributed}, restated below:
\begin{eqnarray}
\epsilon_n\leq \frac32 \exp\left(-n\gamma(\Pbar_{Y|X}) \right),
\end{eqnarray}
where 
\begin{eqnarray}
\gamma(\Pbar_{Y|X})
\triangleq \max_{\alpha\geq1, \alpha'\leq2}\frac{\alpha-1}{2\alpha-\alpha'}\bigg(R-\check{I}_{\Pbar,\alpha}(X;Y)\nonumber\\
+(\alpha'-1)(\check{I}_{\Pbar,\alpha}(X;Y)-\bar{I}_{\Pbar,\alpha'}(X;Y))\bigg)\label{gamma}
\end{eqnarray}
\small
\begin{eqnarray}
\check{I}_{\Pbar,\alpha}(X;Y)
\triangleq \frac{1}{\alpha-1}\log\left(\Ebb_{\Pbar}\left[\left(\frac{\Pbar_{X,Y}(X,Y)}{\Pbar_{X}(X)\Pbar_{Y}(Y)}\right)^{\alpha-1}\right]\right)
\end{eqnarray}
\normalsize
\begin{eqnarray}
\bar{I}_{\Pbar,\alpha'}(X,Y)
&\triangleq& \frac{1}{\alpha'-1}\log\left(\left(\Ebb_{\Pbar_X}\left[\Gamma\right]\right)^2\right)
\end{eqnarray}
\begin{eqnarray}
\Gamma&\triangleq&\sqrt{\Ebb_{\Pbar_{Y|X}}\left[\left(\frac{\Pbar_{XY}(X,Y)}{\Pbar_{X}(X)\Pbar_Y(Y)}\right)^{\alpha'-1}\right]}.
\end{eqnarray}

Both $\epsilon_n'$ and $\epsilon_n$ decay exponentially with $n$. To obtain an upper bound on the excess distortion given in $(\ref{opt})$, we now have a new optimization problem in the following form:
\begin{eqnarray}
\inf_{{\Pbar}_{Y|X}}\left[ \exp\left(-n\eta(\Pbar_{Y|X})\right)+\frac3 2 \exp\left(-n\gamma(\Pbar_{Y|X})\right)\right], \label{opt-sum}
\end{eqnarray}
where $\eta(\Pbar_{Y|X})$ and $\gamma(\Pbar_{Y|X})$ are defined in $(\ref{eta})$ and $(\ref{gamma})$. Note that only choices of ${\Pbar}_{Y|X}$ such that $\Ebb_{\Pbar}[d(X,Y)]<D$ and $I_{\Pbar}(X;Y)<R$ should be considered for the optimization, as other choices render the bound degenerate. 

We can relax $(\ref{opt-sum})$ to obtain a simple upper bound on the excess distortion $\Pbb_{P}[d(X^n,Y^n)>D]$ given in the following theorem.
\begin{thm} \label{thm-sc}
The excess distortion $\Pbb_{P}[d(X^n,Y^n)>D]$ using the likelihood encoder is upper bounded by
\begin{eqnarray}
\inf_{\Pbar_{Y|X}} \frac5 2 \exp\left(-n\min\left\{\eta\left(\Pbar_{Y|X}\right),\gamma \left(\Pbar_{Y|X}\right)\right\}\right)
\end{eqnarray}
where $\eta(\Pbar_{Y|X})$ and $\gamma(\Pbar_{Y|X})$ are given in $(\ref{eta})$ and $(\ref{gamma})$, respectively.
\end{thm}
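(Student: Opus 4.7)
The plan is to package the two exponential bounds already established earlier in this section into a single cleaner expression. The starting point is equation~(\ref{ww1}), which for the excess distortion criterion says
\begin{eqnarray*}
\mathbb{E}_{\mathcal{C}^{(n)}}\bigl[\mathbb{P}_{\mathbf{P}}[d(X^n,Y^n)>D]\bigr] \leq \mathbb{P}_{\Pbar}[d(X^n,Y^n)>D] + \epsilon_n,
\end{eqnarray*}
where $\epsilon_n$ is the total-variation bound supplied by the soft-covering lemma in the form proved in \cite{cuff2012distributed}. So the entire non-asymptotic exponent will come from bounding the two terms on the right-hand side separately and combining them.

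For the first term I would invoke the Chernoff argument in~(\ref{cher1})--(\ref{cher2}): since $d(X^n,Y^n) = \frac{1}{n}\sum_{t=1}^n d(X_t,Y_t)$ is an i.i.d.\ average under $\Pbar$, an application of Markov's inequality to $2^{\beta n d(X^n,Y^n)}$ followed by optimization over $\beta>0$ gives $\mathbb{P}_{\Pbar}[d(X^n,Y^n)>D] \leq \exp(-n\eta(\Pbar_{Y|X}))$ with $\eta$ as defined in~(\ref{eta}). For the second term I would quote the non-asymptotic soft-covering bound from \cite{cuff2012distributed}, which yields $\epsilon_n \leq \frac{3}{2}\exp(-n\gamma(\Pbar_{Y|X}))$ with $\gamma$ as defined in~(\ref{gamma}). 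Both inequalities hold for every fixed choice of test channel $\Pbar_{Y|X}$ such that the relevant exponents are positive.

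The final step is a trivial relaxation: because $\exp(-na) \leq \exp(-n\min\{a,b\})$ for any $a,b$, we have
\begin{eqnarray*}
\exp(-n\eta(\Pbar_{Y|X})) + \tfrac{3}{2}\exp(-n\gamma(\Pbar_{Y|X})) \leq \tfrac{5}{2}\exp\bigl(-n\min\{\eta(\Pbar_{Y|X}),\gamma(\Pbar_{Y|X})\}\bigr).
\end{eqnarray*}
Thus the codebook-averaged excess distortion probability is upper-bounded by $\frac{5}{2}\exp(-n\min\{\eta,\gamma\})$, and by the random-coding argument there exists a deterministic codebook for which $\Pbb_{P}[d(X^n,Y^n)>D]$ satisfies the same bound. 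Taking the infimum over $\Pbar_{Y|X}$ (restricted to channels that make both exponents positive, the only non-degenerate case) yields the claim.

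There is really no hard step here: the two exponential rates $\eta$ and $\gamma$ were established earlier in the section, and the theorem is the packaging of those two bounds via the elementary inequality above. The only mildly delicate point is the bookkeeping to confirm that the constant $\frac{5}{2} = 1 + \frac{3}{2}$ is the right one and that the Chernoff optimization and the soft-covering optimization can be performed using the same $\Pbar_{Y|X}$, which they can since both bounds hold pointwise in the test channel before any optimization.
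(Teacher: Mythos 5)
Your proposal matches the paper's argument step for step: start from the codebook-averaged bound in (\ref{ww1}), bound $\Pbb_{\Pbar}[d(X^n,Y^n)>D]$ by $\exp(-n\eta)$ via the Chernoff calculation (\ref{cher1})--(\ref{cher2}), bound $\epsilon_n$ by $\tfrac{3}{2}\exp(-n\gamma)$ using the non-asymptotic soft-covering bound from \cite{cuff2012distributed}, and then relax the sum (\ref{opt-sum}) to $\tfrac{5}{2}\exp(-n\min\{\eta,\gamma\})$ before infimizing over $\Pbar_{Y|X}$. This is exactly how the paper arrives at the theorem, so nothing further is needed.
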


\begin{rem}
Note that this bound does not achieve Marton's source coding exponent that we know to be optimal \cite{marton},  \cite[Theorem 9.5]{ck2001} for rate-distortion theory. It may very well be that the likelihood encoder does not achieve the optimal exponent, though it may also be an artifact of our proof or the bound for the soft-covering lemma.
\end{rem}

\section{Connection with Random Binning Based Proof} \label{binning}
The likelihood encoder proof technique is similar in many ways to the random binning based analysis approach presented in \cite{rb-yassaee}. In this section, we present the random binning based analysis for point-to-point lossy compression in a format that resembles the likelihood encoder based proof. Our presentation is different from the way the authors presented the scheme in \cite{rb-yassaee}, stating explicitly the behavior of the encoder, for easy comparison with the likelihood encoder and Section \ref{p2p}.

\subsection{The Proportional-Probability Encoder}
We start by defining a source encoder that looks very similar in form to a likelihood encoder defined in Section \ref{sub-le}. Like any other source encoder, a \textit{proportional-probability encoder} receives a sequence $x_1,...,x_n$ and produces an index $m\in \{1, \ldots, 2^{nR}\}$.  

A codebook is specified by a non-empty collection $\Ccal$ of sequences $y^n\in \Ycal^n$ and indices $m(y^n)$ assigned to each $y^n \in \Ycal^n$.  The codebook and a joint distribution $P_{XY}$ specify the proportional-probability encoder.

Let $\Gcal(m|x^n)$ be the probability, as a result of passing $x^n$ through a memoryless channel given by $P_{Y|X}$, of finding $Y^n$ in the collection $\Ccal$ and retrieving the index $m$ from the codebook:
\begin{eqnarray}
&&\Gcal(m|x^n) \nonumber\\
& \triangleq &
\Pbb_{\prod P_{Y|X}} \left[ Y^n \in \Ccal, m(Y^n) = m \mid X^n=x^n \right] \\
& = & \sum_{y^n\in\Ccal}{P_{Y^n|X^n}(y^n|x^n)\mathbbm{1}\{m(y^n)=m\}}.
\end{eqnarray}

A proportional-probability encoder is a stochastic encoder that determines the message index with probability proportional to $\Gcal(m|x^n)$, i.e. 
\begin{equation}
P_{M|X^n}(m|x^n)=\frac{\Gcal(m|x^n)}{\sum_{m'\in\{1, \ldots, 2^{nR}\}}\Gcal(m'|x^n)}\propto \Gcal(m|x^n).
\label{def:ppe}
\end{equation}

Notice that the proportional-probability encoder and the likelihood encoder both behave stochastically with probability proportional to that of a memoryless channel. However, the channels are the reverse direction from each other. We will see that the codebook construction also differs slightly between the two proof techniques.

\subsection{Scheme Using the Proportional-Probability Encoder} \label{ppe}
Before going into the achievability scheme, we first state a lemma that will be used in the analysis.
\begin{lem} [Independence of random binning - \cite{rb-yassaee}, Theorem~1]  \label{conditional-rate}
Given a probability mass function $P_{XY}$, and each $y^n\in \Ycal^n$ is independently assigned to a bin index $b\in \{1, \ldots ,2^{nR_b}\}$ uniformly at random, where $B(y^n)$ denotes this random assignment. Define the joint distribution 
\begin{eqnarray}
\Pbf_{X^nY^nB}(x^n,y^n,b)\triangleq \prod_{i=1}^n P_{XY}(x_i,y_i)\mathbbm{1}\{B(y^n)=b\}.
\end{eqnarray}
If $R_b<H(Y|X)$, then we have
\begin{eqnarray}
\Ebb_{\Bcal}\left[\left\Vert\Pbf_{X^nB}-P_{X^n}P^U_{B}\right\Vert_{TV}\right] \rightarrow_n 0, 
\end{eqnarray}
where $P^U_{B}$ is a uniform distribution on $\{1, \ldots, 2^{nR_b}\}$ and $\Ebb_{\Bcal}$ denotes expectation taken over the random binning.
\end{lem}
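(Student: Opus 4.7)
The plan is to bound the expected total variation directly by the second-moment method on the random binning, using joint typicality of $(X^n, Y^n)$ to upgrade a naïve Rényi-2 bound into one governed by the Shannon conditional entropy $H(Y|X)$.

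First, expand
\[
\Ebb_\Bcal\left[\lVert \Pbf_{X^nB} - P_{X^n}P^U_B \rVert_{TV}\right] = \frac{1}{2}\sum_{x^n, b}\Ebb_\Bcal\left|\Pbf_{X^nB}(x^n, b) - P_{X^n}(x^n)\,2^{-nR_b}\right|,
\]
where $\Pbf_{X^nB}(x^n, b) = \sum_{y^n} P_{X^nY^n}(x^n, y^n)\,\mathbbm{1}\{B(y^n) = b\}$. Since each $B(y^n)$ is uniform on $\{1,\ldots,2^{nR_b}\}$, the bin-average already matches the target, $\Ebb_\Bcal[\Pbf_{X^nB}(x^n,b)] = P_{X^n}(x^n)\,2^{-nR_b}$, so the task reduces to controlling the fluctuation.

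Next, I would restrict the inner sum to the jointly $\delta$-typical set $\Tcal_\delta^n$. Let $\tilde \Pbf_{X^nB}(x^n,b) \triangleq \sum_{y^n:(x^n,y^n)\in \Tcal_\delta^n} P_{X^nY^n}(x^n,y^n)\mathbbm{1}\{B(y^n)=b\}$ and let $\tilde Z_{x^n,b}$ denote its deviation from $P_{X^n}(x^n)\,2^{-nR_b}\,\Pbar_{Y^n|X^n}(\Tcal_\delta^n(x^n)|x^n)$. The atypical remainder, after summing over $(x^n, b)$, contributes at most $\Pbb[(X^n, Y^n) \notin \Tcal_\delta^n]$, which vanishes by the AEP. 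Because the $\{B(y^n)\}$ are mutually independent, all cross terms in $\Ebb_\Bcal[\tilde Z_{x^n,b}^2]$ drop out and only diagonal terms remain, giving
\[
\Ebb_\Bcal[\tilde Z_{x^n,b}^2] \leq 2^{-nR_b}\sum_{y^n:(x^n,y^n)\in \Tcal_\delta^n} P_{X^nY^n}(x^n,y^n)^2.
\]
The conditional typicality bound $P_{Y^n|X^n}(y^n|x^n) \leq 2^{-n(H(Y|X) - \delta)}$ on $\Tcal_\delta^n$ then yields $\Ebb_\Bcal[\tilde Z_{x^n,b}^2] \leq P_{X^n}(x^n)^2\,2^{-n(R_b + H(Y|X) - \delta)}$.

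Finally, Jensen's inequality $\Ebb|\tilde Z| \leq \sqrt{\Ebb \tilde Z^2}$, followed by summation over $(x^n,b)$, produces a total bound of order
\[
2^{nR_b}\cdot 2^{-n(R_b + H(Y|X) - \delta)/2} = 2^{n(R_b - H(Y|X) + \delta)/2},
\]
which decays to zero whenever $R_b < H(Y|X)$ and $\delta$ is chosen small enough. Combined with the vanishing atypical remainder, this completes the argument. The main obstacle I anticipate is the coordinated choice of the typicality slack $\delta$, the AEP-based truncation bound, and the exponential gap $H(Y|X) - R_b$, all of which must shrink simultaneously; without the typical-set restriction, a direct variance computation would only furnish a Rényi-2 threshold rather than the sharp Shannon-entropy one needed for the lemma.
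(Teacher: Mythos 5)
Your argument is correct. Note, however, that the paper itself does not prove this lemma --- it is imported verbatim as Theorem~1 of the cited random-binning paper \cite{rb-yassaee} --- so there is no in-paper proof to compare against; your second-moment-plus-typicality argument is essentially the standard proof of that theorem. The key steps all check out: the bin assignments give $\Ebb_{\Bcal}[\Pbf_{X^nB}(x^n,b)]=P_{X^n}(x^n)2^{-nR_b}$ exactly; independence of the $B(y^n)$ kills the cross terms so that $\Ebb_{\Bcal}[\tilde Z_{x^n,b}^2]\leq 2^{-nR_b}\sum_{y^n}P_{X^nY^n}(x^n,y^n)^2$ over the typical set; the conditional-typicality bound converts the Rényi-2 quantity into $2^{-n(H(Y|X)-\delta)}$; and Jensen plus the sums over $x^n$ and the $2^{nR_b}$ bins give the exponent $\tfrac{n}{2}(R_b-H(Y|X)+\delta)$, with the truncation error (both the atypical mass of $\Pbf_{X^nB}$ and the deficit $1-P_{Y^n|X^n}(\Tcal_\delta^n(x^n)|x^n)$ in the centering term) absorbed into $\Pbb[(X^n,Y^n)\notin\Tcal_\delta^n]\rightarrow_n 0$. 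The only blemish is notational: the centering term should be written with $P_{Y^n|X^n}$ rather than $\Pbar_{Y^n|X^n}$, since in this paper $\Pbar$ is reserved for the auxiliary single-letter distribution chosen in the achievability proofs.
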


We now outline the encoding-decoding scheme based on the proportional-probability encoder. 

Fix a $\Pbar_{Y|X}$ that satisfies $\Ebb_{\Pbar}[d(X,Y)]<D$ and choose the rates $R$ and $R'$ to satisfy $R'<H_{\Pbar}(Y|X)$ and $R+R'>H_{\Pbar}(Y)$.

{\textbf{Codebook generation}}: Each $y^n\in \Ycal^n$ is randomly and independently assigned to the codebook $\Ccal$ with probability $2^{-nR'}$. Then, independent of the construction of $\Ccal$, each $y^n\in\Ycal^n$ is independently assigned uniformly at random to one of $2^{nR}$ bins indexed by $M$. 

{\textbf{Encoder}}: The encoder $\Pbf_{PPE}(m|x^n)$ is the proportional-probability encoder with respect to $\Pbar$.  Specifically, the encoder chooses $M$ stochastically according to \eqref{def:ppe}, with $\Gcal$ based on $\Pbar$ as follows:
\begin{eqnarray}
\Gcal(m|x^n)=\sum_{y^n\in\Ccal}\Pbar_{Y^n|X^n}(y^n|x^n)\mathbbm{1}\{m(y^n)=m\},
\end{eqnarray}
where $\Pbar_{Y^n|X^n}(y^n|x^n)=\prod_{t=1}^n\Pbar_{Y|X}(y_t|x_t).$

{\textbf{Decoder}}: The decoder $\Pbf_D(y^n|m)$ selects a $y^n$ reconstruction that is in $\Ccal$ and has index $m=M$.  There will usually be more than one such $y^n$ sequence, but rarely will there be more than one ``good'' choice, due to the rates used.  The decoder can choose the most probable $y^n$ sequence or the unique typical sequence, etc.  The proof in \cite{rb-yassaee} uses a ``mismatch stochastic likelihood coder'' (MSLC) \cite{yassaee2013non} \cite{slc}, which stochastically decodes $y^n$, but many decoders will achieve the desired result.

\begin{rem}
\label{rem-dec}
Intuitively, a decoder can successfully decode the sequence intended by the encoder since there are roughly $2^{nH_{\Pbar}(Y)}$ typical $y^n$ sequences, and the collection $\Ccal$ together with the binning index $M$ provides high enough rate $R'+R>H_{\Pbar}(Y)$ to uniquely identify the sequence.
\end{rem}

{\textbf{Analysis}}:
The above scheme specifies a system-induced distribution of the form
\begin{eqnarray}
\Pbf_{X^nMY^n}(x^n,m,y^n)=\Pbar_{X^n}\Pbf_{PPE}(m|x^n)\Pbf_D(y^n|m).
\end{eqnarray}

To analyze the above scheme, we start by replacing the codebook used for encoding and decoding with a set of codebooks.  Recall that the codebook consists of a collection $\Ccal$ and index assignments $m(y^n)$ that are both randomly constructed.  Now consider a set of $2^{nR'}$ collections $\{\Ccal_f\}_{f\in\{1, \ldots, 2^{nR'}\}}$, indexed by $f$, created by assigning each $y^n$ sequence in $\Ycal^n$ randomly to exactly one collection equiprobably.  From this we define a set of $2^{nR'}$ codebooks, one for each $f$, each one consisting of the collection $\Ccal_f$ and the common message index function $m(y^n)$.  We use $\Kcal$ to denote this set of random codebooks.

By this construction, the original random collection $\Ccal$ in the codebook used by the encoder and decoder is equivalent in probability to using the first codebook associated with $\Ccal_1$.  It is also equivalent to using a random codebook in the set, which is a point we will utilize shortly.  The purpose of defining multiple codebooks is to facilitate general proof tools associated with uniform random binning.

Here we summarize the proof given in \cite{rb-yassaee}. In addition to the system-induced random variables, we introduce a random variable $F$ which is uniformly distributed on the set $\{1, \ldots, 2^{nR'}\}$ and independent of $X^n$.  The variable $F$ selects the codebook to be used---everything else about the encoding and decoding remains the same.  We have noted that the behavior and performance of this system with multiple codebooks is equivalent to that of the actual encoding and decoding.  Nevertheless, we will formalize this connection in \eqref{common}.  For now, we refer to this new distribution that includes many codebooks as the pseudo induced distribution $\tilde{\Pbf}$.  According to $\tilde{\Pbf}$, there is a set of randomly generated codebooks, and the one for use is selected by $F$.

The pseudo induced distribution can be expressed in the following form:
\begin{eqnarray}
&&\tilde{\Pbf}_{FX^nMY^n}(f,x^n,m,y^n)\nonumber\\
&=&P_F(f)\Pbar_{X^n}(x^n)\Pbf_{PPE}(m|x^n,f)\Pbf_D(y^n|m,f).
\end{eqnarray}
We reiterate that
\begin{eqnarray}
\Pbf_{X^nMY^n} \, {\buildrel D \over =} \, \tilde{\Pbf}_{X^nMY^n|F=f}, \quad \forall f\in\{1, \ldots, 2^{nR'}\}. \label{pptilde}
\end{eqnarray}

We now introduce one more random variable that never actually materialized during the implementation.  Let $\tilde{Y}^n$ be the reconstruction sequence intended by the encoder.  The encoding can be considered as a two step process.  First, the encoder selects a $\tilde{Y}^n$ sequence from $\Ccal_f$ with probability proportional to that induced by passing $x^n$ through a memoryless channel given by $\Pbar_{Y|X}$.  Next, the encoder looks up the message index $m(\tilde{Y}^n)$ and transmits it as $M$.

Accordingly, we replace the encoder in the pseudo induced distribution with the two parts discussed:
\begin{equation}
\label{eq:two-step-encoder}
\Pbf_{PPE}(m|x^n,f) = \sum_{\tilde{y}^n} \Pbf_{E1}(\tilde{y}^n|x^n,f) \Pbf_{E2}(m|\tilde{y}^n).
\end{equation}

To analyze the expected distortion performance of the pseudo induced distribution $\tilde{\Pbf}$, we introduce two approximating distributions $\Qbf^{(1)}$ and $\Qbf^{(2)}$.

Let us first define the distribution $\Qbf^{(1)}$:
\begin{eqnarray}
&&\Qbf^{(1)}_{FX^n\tilde{Y}^nMY^n}(f,x^n,\tilde{y}^n,m,y^n)\nonumber\\
&\triangleq&\Pbar_{X^nY^n}(x^n,\tilde{y}^n)\Qbf_{F|\tilde{Y}^n}(f|\tilde{y}^n)\nonumber\\
&&\Pbf_{E2}(m|\tilde{y}^n)\Pbf_{D}(y^n|m,f)\label{spec1}
\end{eqnarray}
where $\Qbf_{F|\tilde{Y}^n}(f|\tilde{y}^n)=\mathbbm{1}\{\tilde{y}^n\in \Ccal_f\}$.  In words, $\Qbf^{(1)}$ is constructed from an i.i.d. distribution according to $\Pbar$ on $(X^n,\tilde{Y}^n)$, two random binnings $F$ and $M$, as specified by the construction of the set of codebooks $\Kcal$, and a decoding of $Y^n$ from the random binnings.

Now we arrive at the reason for using the proportional-probability encoder.  Part 1 of the encoder that selects the $\tilde{Y}^n$ sequences is precisely the conditional probability specified by $\Qbf^{(1)}$:
\begin{eqnarray}
\Qbf^{(1)}_{\tilde{Y}^n|X^nF}(\tilde{y}^n|x^n,f)=\Pbf_{E1}(\tilde{y}^n|x^n,f).
\end{eqnarray}
Therefore, the only difference between the pseudo induced distribution $\tilde{\Pbf}$ and $\Qbf^{(1)}$ is the conditional distribution of $F$ given $X^n$.  This is where Lemma~\ref{conditional-rate} plays a role.

Applying Lemma~\ref{conditional-rate} by identifying $F$ as the uniform binning, since $R'<H_{\Pbar}(Y|X)$, we obtain
\begin{eqnarray}
\Ebb_{\Kcal}\left[\left\Vert\Qbf^{(1)}_{X^nF}-\tilde{P}_{X^nF}\right\Vert_{TV}\right]\leq\epsilon_n^{(rb)}\rightarrow_n 0.
\end{eqnarray}
Using Property \ref{property-tv} $(\ref{d})$, we have
\begin{eqnarray}
\Ebb_{\Kcal}\left[\left\Vert \tilde{\Pbf}_{FX^nY^nM\hat{Y}^n}-\Qbf^{(1)}_{FX^nY^nM\hat{Y}^n}\right\Vert_{TV}\right] \leq \epsilon_n^{(rb)}. \label{PtoQrb}
\end{eqnarray}

The next approximating distribution we define is $\Qbf^{(2)}$:
\begin{eqnarray}
&&\Qbf^{(2)}_{FX^n\tilde{Y}^nMY^n}(f,x^n,\tilde{y}^n,m,y^n)\nonumber\\
&\triangleq& \Qbf^{(1)}_{FX^n\tilde{Y}^nM}(f,x^n,\tilde{y}^n,m)\mathbbm{1}\{y^n=\tilde{y}^n\}.
\end{eqnarray}
Recall from Remark \ref{rem-dec}, decoding $\tilde{Y}^n$ will succeed with high probability if the total rate of the binnings is above the entropy rate of the sequence that was binned. This is well known from the Slepian-Wolf coding result \cite{slepian-wolf1973} \cite{cover-sw1975}. Therefore, since the total binning rate $R+R'>H_{\Pbar}(Y)$, according to the definition of total variation, we obtain
\begin{eqnarray}
\Ebb_{\Kcal}\left[\left\Vert \Qbf^{(1)}_{\tilde{Y}^nY^n}-\Qbf^{(2)}_{\tilde{Y}^nY^n}\right\Vert_{TV}\right]\leq \epsilon_n^{(sw)}\rightarrow_n 0,
\end{eqnarray}
where $\epsilon_n^{(sw)}$ is the decoding error.

Again by Property \ref{property-tv} $(\ref{d})$, we have
\begin{eqnarray}
\Ebb_{\Kcal}\left[\left\Vert \Qbf^{(1)}_{FX^n\tilde{Y}^nMY^n}-\Qbf^{(2)}_{FX^n\tilde{Y}^nMY^n}\right\Vert_{TV}\right] \leq \epsilon_n^{(sw)}. \label{QtoQtrb}
\end{eqnarray}

Combining $(\ref{PtoQrb})$ and $(\ref{QtoQtrb})$ using the triangle inequality, we obtain
\begin{eqnarray}
&&\Ebb_{\Kcal}\left[\left\Vert \tilde{\Pbf}_{FX^n\tilde{Y}^nMY^n}-\Qbf^{(2)}_{FX^n\tilde{Y}^nMY^n}\right\Vert_{TV}\right] \nonumber\\
&\leq& \epsilon_n^{(rb)}+\epsilon_n^{(sw)}. \label{PtoQt}
\end{eqnarray}

Note that the distortion under any realization of $\Qbf^{(2)}$, regardless of the codebook, is
\begin{eqnarray}
\Ebb_{Q^{(2)}}[d(X^n,Y^n)]&=&\Ebb_{Q^{(2)}}[d(X^n,Y^n)]\\
&=&\Ebb_{\Pbar}[d(X,Y)].
\end{eqnarray}

Applying Property \ref{property-tv}$(\ref{b})$, we can obtain
\begin{eqnarray}
&&\Ebb_{\Kcal}\left[\Ebb_{\tilde{\Pbf}}[d(X^n,Y^n)]\right]\nonumber\\
&\leq& \Ebb_{\Pbar}[d(X,Y)]+d_{max}(\epsilon_n^{(rb)}+\epsilon_n^{(sw)}).  \label{rb-bound}
\end{eqnarray}

Furthermore, by symmetry and the law of total expectation, we have
\begin{eqnarray}
&&\Ebb_{\Kcal}\left[\Ebb_{\tilde{\Pbf}}[d(X^n,Y^n)]\right]\nonumber\\
&=&\Ebb_{F}\left[\Ebb_{\Kcal}\left[\Ebb_{\tilde{\Pbf}}[d(X^n,Y^n)] \mid F \right] \right] \\
& = & \Ebb_{\Kcal}\left[\Ebb_{\tilde{\Pbf}}[d(X^n,Y^n)] \mid F=1 \right] \\
& = & \Ebb_{\Kcal}\left[\Ebb_{\Pbf}[d(X^n,Y^n)] \right],
\label{common}
\end{eqnarray}
where the last equality comes from the observation in $(\ref{pptilde})$.

Finally, applying the random coding argument, there exists a code that gives
\begin{eqnarray}
\Ebb_{P}[d(X^n,Y^n)]\leq\Ebb_{\Pbar}[d(X,Y)]+d_{max}\left(\epsilon_n^{(rb)}+\epsilon_n^{(sw)}\right),
\end{eqnarray}
which is less than $D$ for $n$ large enough.

\begin{rem}
This proof method has also proven effective in multi-terminal settings as well. One advantage to this approach is that all auxiliary variables are treated as i.i.d. sequences at some point in the analysis, which is conceptually helpful.
\end{rem}

\begin{rem}
Notice that the error term in the likelihood encoder approach only arises from the soft-covering lemma, while the error terms in the proportional-probability approach come from two places, random binning and MSLC decoding. A non-asymptotic analysis using the proportional-probability approach is given in \cite{eva-thesis}.
\end{rem}

\section{Conclusion} \label{conclude}
In this paper, we have demonstrated how the likelihood encoder can be used to obtain achievability results for various lossy source coding problems. The analysis of the likelihood encoder relies on the soft-covering lemma. Although the proof method is unusual, we hope to have demonstrated that this method of proof is simple, both conceptually and mechanically. The simplicity is accentuated when used for distributed source coding because it bypasses the need for a Markov lemma of any form and it avoids the technical complications that can arise in analyzing the decoder whenever random binning is involved in lossy compression. This proof method applies directly to continuous sources as well with no need for additional arguments, because the soft-covering lemma is not restricted to discrete sources. The likelihood encoder also simplifies analysis in secrecy settings, though this was not demonstrated within this paper. In the secrecy settings \cite{schieler-journal}, \cite{song-sideinfo}, \cite{song-isit2015}, a superposition codebook together with a superposition version of the soft-covering lemma is typically required.

Additionally, a parallel comparison of the achievability technique of \cite{rb-yassaee}, which we dub the ``proportional-probability encoder," has been provided. Our presentation emphasizes the relationship to the likelihood encoder, both operating stochastically with respect to reverse channels of each other.

\section{Acknowledgement} 
This research was supported in part by the National Science Foundation under Grants CCF-1350595, CCF-1116013, CCF-1420575 and ECCS-1343210, and in part by Air Force Office of Scientific Research under Grant FA9550-15-1-0180 and FA9550-12-1-0196.

\ifCLASSOPTIONcaptionsoff
  \newpage
\fi

\bibliographystyle{ieeetr}
\newpage
\bibliography{likelihood_encoder}

\begin{IEEEbiographynophoto}{Eva Song}
(SÕ13, MÕ16) received her master's and PhD degrees in Electrical Engineering from Princeton University in 2012 and 2015, respectively. She received her B.S. degree in Electrical and Computer Engineering from Carnegie Mellon University, Pittsburgh, PA, in 2010. In her PhD work, she studied lossy compression and rate-distortion based information-theoretic secrecy in communications. She is the recipient of Wu Prize for Excellence in 2014. During 2012, she interned at Bell Labs, Alcatel-Lucent, NJ, to study secrecy in optical communications. Her general research interests include: information theory, security, and lossy compression. She joined Amazon in 2015.
\end{IEEEbiographynophoto}

\begin{IEEEbiographynophoto}{Paul Cuff}
(SÕ08, MÕ10) received the B.S. degree in electrical engineering from Brigham Young University, Provo, UT, in 2004 and the M.S. and Ph. D. degrees in electrical engineering from Stanford University in 2006 and 2009. Since 2009 he has been an Assistant Professor of Electrical Engineering at Princeton University.

As a graduate student, Dr. Cuff was awarded the ISIT 2008 Student Paper Award for his work titled ÒCommunication Requirements for Generating Correlated Random VariablesÓ and was a recipient of the National Defense Science and Engineering Graduate Fellowship and the Numerical Technologies Fellowship. As faculty, he received the NSF Career Award in 2014 and the AFOSR Young Investigator Program Award in 2015.
\end{IEEEbiographynophoto}

\begin{IEEEbiographynophoto}{H. Vincent Poor}
(SÕ72, MÕ77, SMÕ82, FÕ87) received the Ph.D. degree in electrical engineering and computer science from Princeton University in 1977.  From 1977 until 1990, he was on the faculty of the University of Illinois at Urbana-Champaign. Since 1990 he has been on the faculty at Princeton, where he is the Dean of Engineering and Applied Science, and the Michael Henry Strater University Professor of Electrical Engineering. He has also held visiting appointments at several other institutions, most recently at Imperial College and Stanford. His research interests are in the areas of information theory, stochastic analysis and statistical signal processing, and their applications in wireless networks and related fields. Among his publications in these areas is the recent book Mechanisms and Games for Dynamic Spectrum Allocation (Cambridge University Press, 2014).

Dr. Poor is a member of the National Academy of Engineering and the National Academy of Sciences, and is a foreign member of Academia Europaea and the Royal Society. He is also a fellow of the American Academy of Arts and Sciences, the Royal Academy of Engineering (U. K.), and the Royal Society of Edinburgh.  In 1990, he served as President of the IEEE Information Theory Society, in 2004-07 as the Editor-in-Chief of these TRANSACTIONS, and in 2009 as General Co-chair of the IEEE International Symposium on Information Theory, held in Seoul, South Korea. He received a Guggenheim Fellowship in 2002 and the IEEE Education Medal in 2005. Recent recognition of his work includes the 2014 URSI Booker Gold Medal, the 2015 EURASIP Athanasios Papoulis Award, the 2016 John Fritz Medal, and honorary doctorates from Aalborg University, Aalto University, HKUST, and the University of Edinburgh.
\end{IEEEbiographynophoto}

\end{document}